\newcommand\EatDot[1]{}
\newcommand{\editing}[1]{{ #1}}
\begin{document}

\title{Throughput-Optimal Broadcast in Wireless Networks with Point-to-Multipoint Transmissions}

\author{Abhishek Sinha}
\affiliation{
\department {Laboratory for Information and Decision Systems}
\institution{MIT}}
\email{sinhaa@mit.edu}
\author{Eytan Modiano}
\affiliation{
\department {Laboratory for Information and Decision Systems}
\institution{MIT}}
\email{modiano@mit.edu}

% \CopyrightYear{2016}
% \setcopyright{acmcopyright}
% \conferenceinfo{MobiHoc'16,}{July 04-08, 2016, Paderborn, Germany}
% \isbn{978-1-4503-4184-4/16/07}\acmPrice{\$15.00}
% \doi{http://dx.doi.org/10.1145/2942358.2942390}
%\numberofauthors{2}
%% Three authors sharing the same affiliation.
%    \author{
%      \alignauthor Abhishek Sinha\\ 
%      \affaddr{Laboratory for Information and Decision Systems}\\
%      \affaddr{MIT}     
%      \email{sinhaa@mit.edu}
%%      
%%   \and \and     \alignauthor Georgios Paschos\\
%%       \affaddr{Mathematical and Algorithmic Sciences Lab }
%%       \affaddr{Huawei Technologies Co. Ltd} \\    
%%    \email{ georgios.paschos@huawei.com}
%%
%  \and 
%     \alignauthor Eytan Modiano\\  
%     \affaddr{Laboratory for Information and Decision Systems}\\
%      \affaddr{MIT}     
%      \email{modiano@mit.edu}  
%    %  \email{modiano@mit.edu}
%%
%%       \sharedaffiliation
%%       \affaddr{Department of Electrical Engineering and Computer Science }  \\
%%       \affaddr{University of California, Berkeley }   \\
%%       \affaddr{Berkeley, CA 94720-1776 }
%          }
%\maketitle

% newly added 
%\onecolumn
%\linespread{2}

% For ease of drafting, rendering to single column version
%\onecolumn

% \author{
% \IEEEauthorblockN{Abhishek Sinha}
% \IEEEauthorblockA{Laboratory for Information and Decision Systems, Massachusetts Institute of Technology, Cambridge, MA 02139\\
% Email: sinhaa@mit.edu}
% }

%%%%%%%%%%%%%%%%%%%%%%%%%%%%%%%%%%%%%%%%%%%%%%%%%%%%%%%%%%%%%%%%%%%%%%%%%%%%%%%%%%%%%%%%%

\begin{abstract}
 We consider the problem of efficient packet dissemination in wireless networks with point-to-multi-point wireless broadcast channels. We propose a dynamic policy, which achieves the broadcast capacity of the network. This policy is obtained by first transforming the original multi-hop network into a precedence-relaxed virtual single-hop network and then finding an optimal broadcast policy for the relaxed network. The resulting policy is shown to be throughput-optimal for the original wireless network using a sample-path argument. We also prove the \textbf{NP}-completeness of the finite-horizon broadcast problem, which is in contrast with the polynomial time solvability of the problem with point-to-point channels. Illustrative simulation results demonstrate the efficacy of the proposed broadcast policy in achieving the full broadcast capacity with low delay. 
\end{abstract}

\begin{CCSXML}
<ccs2012>
<concept>
<concept_id>10003033.10003068.10003073.10003075</concept_id>
<concept_desc>Networks~Network control algorithms</concept_desc>
<concept_significance>500</concept_significance>
</concept>
<concept>
<concept_id>10003752.10003809.10010047.10010048.10003808</concept_id>
<concept_desc>Theory of computation~Scheduling algorithms</concept_desc>
<concept_significance>300</concept_significance>
</concept>
</ccs2012>
\end{CCSXML}

\ccsdesc[500]{Networks~Network control algorithms}
\ccsdesc[300]{Theory of computation~Scheduling algorithms}
%\printccsdesc

\keywords {Broadcasting, Scheduling, Queueing Theory, Throughput Optimality}

\maketitle
\section{Introduction and Related Work} \label{introduction}
%\begin{introduction} \label{introduction}
%Discuss how this problem is different from Broadcasting in wireless networks with point-to-point channels. This problem is hard while the previous problem was easy. Also, the hardness does not stem from wireless activations (cite Devavrat Shah's hardness paper \cite{shah2011hardness}). It stems from the combinatorial aspect of distributing packets in the network. Also, we need to make explicit distinction of our hardness result with others (e.g. \cite{kuhn2009brief}, it proves hardness for unreliable network). 
%\end{introduction}
The problem of disseminating packets efficiently from a set of source nodes to all nodes in a network is known as the \emph{Broadcast Problem}. Broadcasting is a fundamental network functionality, which is used frequently in numerous practical applications, including military communication \cite{Na:2015:RMS:2836587.2836593}, information dissemination and disaster management \cite{Heinzelman:1999:API:313451.313529}, in-network function computation \cite{dietzfelbinger2004gossiping} and efficient dissemination of control information in vehicular networks \cite{chen2010broadcasting}.

Due to its fundamental nature, the Broadcasting problem in wireless networks has been studied extensively in the literature. As a result, a number of different algorithms have been proposed for optimizing different efficiency metrics. Examples include minimum energy broadcast \cite{maric2004cooperative}, minimum latency broadcast \cite{gandhi2008minimizing}, broadcasting with minimum number of retransmissions \cite{wong2013efficient}, and throughput-optimal broadcast  \cite{massoulie2007randomized}. A comprehensive study of different broadcasting algorithms proposed for Mobile Adhoc networks is presented in \cite{williams2002comparison}. 

A fundamental feature of the wireless medium is the inherent point-to-mutipoint nature of wireless links, where a packet transmitted by a node can be heard by all its neighbors. This feature, also known as the wireless broadcast advantage \cite{cui2007distributed}, is especially useful in network-wide broadcast applications, where the objective is to efficiently disseminate the packets among all nodes in the network. Additionally, because of inter-node interference, the set of simultaneous transmissions in a wireless network is restricted to the set of non-interfering feasible schedules.  Designing a broadcast algorithm which efficiently utilizes the broadcast advantage, while respecting the interference constraints is a challenging problem.

 The problem of throughput optimal multicasting in wired networks has been considered in \cite{swati}. In our recent works \cite{sinha2015throughput} \cite{sinha2016throughput} \cite{sinha2016throughput2}, we studied the problem of throughput optimal broadcasting in wireless networks with directed \emph{point-to-point-links} and designed several efficient broadcasting algorithms. The problem of designing throughput optimal broadcast policy in wireless networks with \emph{point-to-multi-point links} was considered in \cite{towsley2008rate}, where the authors studied a highly restrictive ``scheduling-free'' model, where it is assumed that scheduling decisions are made by a central controller, acting independently of their algorithm. With this assumption, they obtained a randomized packet forwarding scheme, which requires a continuous exchange of control information among the neighboring nodes. This algorithm was proved to be throughput optimal \emph{with respect to the given schedules}, using fluid limit techniques. In this paper, we consider the joint problem of throughput optimal scheduling \emph{and} packet dissemination in wireless networks with point-to-multi-point links. Our approach uses the concept of \emph{virtual network}, that we recently introduced in \cite{sinha2016optimal} for solving the generalized network flow problem with point-to-point links. To the best of our knowledge, this is the first known throughput optimal broadcast algorithm in wireless networks with broadcast advantage. \\
 
 \editing{The main contributions of this paper are as follows:
 \begin{itemize}
  \item We propose an online dynamic policy for throughput-optimal broadcasting in wireless networks with point-to-multipoint links.
  \item We prove the \textbf{NP}-completeness of the corresponding finite horizon wireless broadcast problem. 
  \item We introduce a new control policy and proof technique by combining the stochastic Lyapunov drift theory with the deterministic adversarial queueing theory. This essentially enables us to derive a stabilizing control policy for a multi-hop network by solving the problem on  a simpler \emph{precedence-relaxed} virtual single-hop network. 
  %\item We establish a stronger almost sure $\mathcal{O}(\log t)$ growth rate of the queues as compared to our earlier work \cite{sinha2016optimal}, where we established a weaker almost sure $o(t)$ growth rate of the queues for flow problems without broadcast advantage. 
 \end{itemize}

 }
 \editing{The rest of the paper is organized as follows. In section \ref{model} we describe the system model and formulate the problem. In section \ref{hardness} we prove the hardness of the finite-horizon version of the problem. Next, in section \ref{bcast_algorithms} we derive an optimal control policy for a related relaxed version of the wireless network. This control policy is then applied to the original unrelaxed network in section \ref{physical_queue_stability}, where we show that the resulting policy is throughput-optimal, when used in conjunction with a priority-based packet scheduling policy. In section \ref{simulation}, we demonstrate the efficacy of the proposed policy via numerical simulations. Finally, we conclude the paper in section \ref{conclusion}.  
 
 }

% The central algorithmic idea in this paper, namely the \emph{virtual network} and the \emph{virtual queues} was first introduced in our recent work \cite{sinha2016optimal}, where we studied the problem of throughput-optimal routing and scheduling in networks with point-to-point links. Contrary to \cite{sinha2016optimal}, where the virtual queues were associated with each link, in this paper, we associate the virtual queues with each node, due to the point-to-multi-point nature of the wireless links. The corresponding routing, link scheduling, and the packet scheduling policies are also different in these two papers, although both of them are derived from a common conceptual framework. Moreover, using an improved analysis, here we obtain an almost sure $\mathcal{O}(\log(t))$ bound for the virtual queue size, whereas in \cite{sinha2016optimal}, only a $o(t)$ bound for the virtual queue size was established. Thus, these series of papers firmly establish the efficacy of the virtual network technique and the universality of the proposed \textsf{UMW} policy for a wide range of routing and scheduling problems in wireless networks.  

\section{System Model and Problem Formulation} \label{model}
We consider the problem of efficiently disseminating packets, arriving randomly at source nodes, to all nodes in a wireless network. The system model and the precise problem statement are described below.
\subsection{Network Model}
Consider a wireless network with its topology given by the directed graph $\mathcal{G}(V,E)$. The set $V$ denotes the set of all  nodes, with $|V|=n$. If node $j$ is within the transmission range of node $i$, there is a directed edge $(i,j)\in E$ connecting them. Due to the inherent point-to-multi-point broadcast nature of the radio channel, a transmitted packet can be heard by all out-neighbors of the transmitting node. In other words, the packets are transmitted over the \emph{hyperedges}, where a hyperedge is defined to be the union of all outgoing edges from a node. The system evolves in a slotted time structure. External packets, which are to be broadcasted throughout the network, arrive at designated source nodes. \\
%The node $i$ has a capacity of transmitting $c_i$ packets per slot, $\forall i \in V$. \\
For simplicity of exposition, we consider only static networks with a single source node $\texttt{r}$. However, the algorithm and its analysis presented in this paper extend to time-varying dynamic networks with multiple source nodes in a straightforward manner. We will consider time-varying networks in our numerical simulations.
%The problem of broadcasting in time-varying networks is explored in our numerical simulations.
%These packets need to be routed to all nodes in the graph by spanning trees, chosen for each arriving packet. 
%In this paper we are concerned with the problem of efficiently disseminating these packets among all nodes in the network in a multi-hop fashion. 
\subsection{Wireless Transmission Model}
When a node $i\in V$ is scheduled for transmission, it can transmit any of its received packets at the rate of $c_i$ packets per slot to \emph{all} of its out-neighbors over its outgoing hyperedge. See Figure \ref{hyperedge}. Due to the wireless interference constraint, only a selected subset of nodes can feasibly transmit over the hyperedges simultaneously without causing collisions. The wireless channel is assumed to be error-free otherwise. The set of all feasible transmission schedules may be described concisely using the notion of a \emph{Conflict Graph} $\mathcal{C}(\mathcal{G})$. The set of vertices in the conflict graph is the same as the set of nodes in the network $V.$ There is an edge between two nodes in the conflict graph if and only if these two nodes \emph{cannot} transmit simultaneously without causing collision. \editing{Note that our \emph{node-centric} definition of conflict graphs is a little different from the traditional \emph{edge-centric} definition of conflict graph, which concerns point-to-point transmissions \cite{jain2005impact} \cite{CG_def}}.\\
As the simplest example of the interference model, consider a wireless network where each node transmits on a separate channel, causing no inter-node interference. Hence, any subset of nodes can transmit at the same slot, and the conflict graph does not contain any edges. For another example, consider a wireless network subject to primary interference constraints.  In this case, the edge $(i,j)$ is \emph{absent} in the conflict graph $\mathcal{C}(\mathcal{G})$ if and only if nodes $i$ and $j$ are \emph{not} in the transmission range of each other and their out-neighbor-sets are disjoint. 
%In particular, assuming the unit disk model for wireless transmissions, the edge $(i,j)$ is absent in the conflict graph $\mathcal{C}(\mathcal{G})$ if and only if there is no node in the intersection of transmission regions of the nodes $i$ and $j$. \\
The set of all feasible transmission schedules $\mathcal{M}$ consists of the set of all \emph{Independent Sets} in the conflict graph. \\
\editing{Note that the above definition of feasible schedules and conflict graph, does not allow \emph{any} collision in the network. The same assumption was also used in \cite{towsley2008rate}, where such schedules were called ``interference-free''. However, due to the point-to-multi-point nature of the wireless medium, it is possible (and sometimes benefitial) to consider schedules that allow some collisions, so that a transmitted packet may be correctly received only by a strict subset of neighbors. As it will be clear in what follows, it is straightforward to extend our algorithm to allow such general schedules, albeit at the expense of additional computational complexity. In order to present the main ideas in a simplified setting, in the following, we stick to the ``interference-free'' schedules, as defined above. }

\begin{figure}
\centering
\begin{overpic}[width=0.23\textwidth]{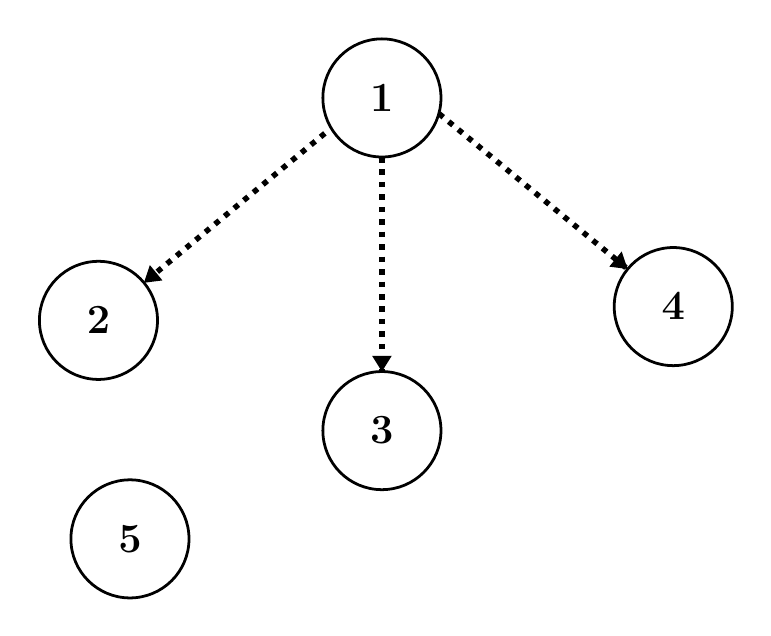}
%\put(56,50){\scriptsize{Capacity=2}}
%\put(89,32){\scriptsize{Capacity=1}}
\end{overpic}
%\vspace{5pt}
\caption{\small{An example of packet transmission over hyperedges - when the node $1$ transmits a packet, assuming no interference, it is received simultaneously by the neighboring nodes $2,3$ and $4$.}}
\label{hyperedge}
\end{figure}

\subsection{The Broadcast Policy-Space $\Pi$}
We first recall the definition of a connected dominating set of a graph $\mathcal{G}$ \cite{west2001introduction}. 
\begin{framed}
\begin{definition}[Connected Dominating Set] 
A connected dominating set $D$ of a graph $\mathcal{G}(V,E)$ is a subset of vertices with the following properties:
\begin{itemize}
\item The source node $\texttt{r}$ is in $D$.
\item The induced subgraph $\mathcal{G}(D)$ is connected.
\item Every vertex in the graph either belongs to the set $D$ or is adjacent to a vertex in the set $D$.
\end{itemize}
\end{definition}
\end{framed}
A connected dominating set $D$ is called \textbf{minimal} if $D\setminus \{v\}$ is not a connected dominating set for any $v \in D$. \editing{The set of all minimal connected dominating set is denoted by $\mathcal{D}$.}\\
% In a multihop network we have the following obvious precedence constraint on packet forwarding:
% \begin{framed}
% \begin{definition}[Precedence Constraint]
% A packet $p$ can be transmitted by a node $i$ at time $t$ only if it has been received by the node $i$ \emph{prior} to time $t$.
% \end{definition}
% \end{framed}
%It is clear that in order to broadcast a packet $p$ among all nodes in the network, the packet must be transmitted eventually by every node in some minimal connected dominating set $S \subset V$ of the graph $\mathcal{G}$. 
A packet $p$ is said to have been \emph{broadcasted} by time $t$ if the packet $p$ is present at every node in the network by time $t$.\\
It is evident that a packet $p$ is broadcasted if it has been transmitted sequentially by every node in a connected dominating set $D$. 
%In other words, a feasible sequence of transmission actions upon any broadcasted packet $p$  is in one-to-one correspondence with a minimal connected dominating set of the graph $\mathcal{G}$. 
An admissible broadcast policy $\pi$ is a sequence of actions $\{\pi_t\}_{t \geq 0}$ executed at every slot $t$. The action at time slot $t$ consists of the following three operations:
\begin{enumerate}
\item \textbf{Route Selection:} Assign a connected dominating set $D \in \mathcal{D}$ to every incoming packet at the source $\texttt{r}$ for routing.
\item \textbf{Node Activation:} Activate a subset of nodes from the set of all feasible activations $\mathcal{M}$.
\item \textbf{Packet Scheduling:} Transmit packets from the activated nodes according to some scheduling policy.
\end{enumerate}
The set of all admissible broadcast policies is denoted by $\Pi$.  The actions executed at every slot may depend on any past or future packet arrival and control actions. 

Assume that under the action of the broadcast-policy $\pi$, the set of packets received by node $i$ at the end of slot $T$ is $N^\pi_i(T)$. Then the set of packets $B(T)$ received by all nodes, at the end of time $T$ is given by
\begin{eqnarray}
B^\pi(T)= \bigcap_{i \in V} N^\pi_i(T).
\end{eqnarray}
\subsection{Broadcast Capacity $\lambda^*$}
Let $R^{\pi}(T)=|B^{\pi}(T)|$ denote the number of packets delivered to all nodes in the network up to time $T$, under the action of an admissible policy $\pi$. Also assume that the external packets arrive at the source node with expected rate of $\lambda$ packets per slot. The policy $\pi$ is called a \emph{broadcast policy of rate $\lambda$} if
\begin{equation} \label{limit_res}
\lim_{T \to \infty} \frac{R^{\pi}(T)}{T}= \lambda, \hspace{5pt} \mathrm{w.p.} 1
\end{equation} \label{pol_def}
The broadcast capacity $\lambda^*$ of the network is defined as 
\begin{eqnarray}
\lambda^*= \sup_{\pi \in \Pi}\{\lambda: \pi \text{ is a broadcast policy of rate } \lambda\}
\end{eqnarray}

The \textsf{Wireless Broadcast} problem is defined as finding an admissible policy $\pi$ that achieves the Broadcast rate $\lambda^*$.  
\section{Hardness Results} \label{hardness}
%In this section we will show that the problem \textsc{Wireless Broadcast} is computationally hard in the sense, any policy that solves the \textsc{Wireless Broadcast} problem, must necessarily produce a solution to another NP-hard problem. 

% Talk about distinction between finite and infinite horizon results

Since a broadcast policy, as defined above, continues to be executed forever \editing{(compared to the finite termination property of standard algorithms)}, the usual notions of computational complexity theory do not directly apply in characterizing the complexity of these policies. Nevertheless, we show that the closely related problem of finite horizon broadcasting is NP-hard. Remarkably, this problem remains hard even if the node activation constraints are eliminated (i.e., all nodes can transmit packets at the same slot, which occurs \emph{e.g.}, when each node transmits over a distinct channel). Thus, the hardness of the problem arises from the combinatorial nature of distributing the packets among the nodes. This is in sharp contrast with the polynomially solvable \textsc{Wired Broadcast} problem where the broadcast nature of the wireless medium is absent and different outgoing edges from a node can transmit different packets at the same slot \cite{edmonds} \cite{sinha2015throughput} \cite{sinha2016throughput2}. 

Consider the following finite horizon problem called \textsf{\textbf{Wireless Broadcast}}, with the input parameters $\mathcal{G},M,T$.
%\begin{framed}[]
\begin{itemize}
\item \textbf{INSTANCE:} A Graph $\mathcal{G}(V,E)$ with capacities $C$ on the nodes. A set of $\mathcal{M}$ of packets with $|\mathcal{M}|=M$ at the source and a time horizon of $T$ slots. 
\item \textbf{QUESTION:} Is there a scheduling algorithm $\pi$ which routes all of these $M$ packets to all nodes in the network by time $T\geq 2$, i.e. $B^{\pi}(T)=\mathcal{M}$?
\end{itemize}
%\end{framed}

% %The average broadcast-capacity is defined by $\bar{\lambda}(T)=M/T$. 
% \end{framed}
% For a fixed (and large) value of $T$, the broadcast capacity $\lambda^*$ of the network is defined as the largest value of $M/T$ ($M$ is the variable here) so that the problem admits a YES solution. \\
% Clearly, for the problem to have an YES solution it is necessary that $\frac{M}{T} \leq C_{\texttt{r}}$, because the source $\texttt{r}$ can only inject at most $C_{\texttt{r}}$ new packets in the network per slot. Since the problem is Monotone (i.e., YES answer for $M_1$ $\implies$ YES for $M_2$ where $M_2 \leq M_1$, it follows that the broadcast capacity of the network may be obtained by doing $\textsc{Binary Search}$ in $\mathcal{O}(\log(C_{\texttt{r}}T))=\mathcal{O}(\log(C_{\texttt{r}})+\log(T))$ times, which is strongly polynomial in the input-parameters.  

We prove the following hardness result:

 \begin{framed}
 \begin{theorem} \label{NP_hard_result}
 $\mathsf{Wireless} $ $ \mathsf{Broadcast}$ is \textbf{NP}-complete.
 \end{theorem}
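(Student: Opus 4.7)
The plan is two-part: first verify membership in $\textbf{NP}$, then reduce from a known $\textbf{NP}$-complete problem. For membership, the natural certificate is an explicit schedule specifying, for each slot $t\in[T]$ and each transmitting node $v$, the list of at most $c_v$ packets that $v$ sends; this has polynomial size, and one can check in polynomial time that (i) in every slot the active set is an independent set of $\mathcal{C}(\mathcal{G})$, (ii) every node transmits only packets already in its buffer, and (iii) $B^{\pi}(T)=\mathcal{M}$.

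For $\textbf{NP}$-hardness, I would reduce from the \emph{Domatic Number} problem (well-known to be $\textbf{NP}$-complete): given an undirected graph $H$ and an integer $M$, decide whether $V(H)$ can be partitioned into $M$ pairwise disjoint dominating sets. The paper's remark that hardness persists even without node-activation constraints suggests a purely combinatorial obstruction, and dominating-set partitioning captures exactly the obstruction forced by the point-to-multipoint broadcast advantage.

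Given an instance $(H,M)$, construct a three-layer directed graph $\mathcal{G}$: a source $r$, a relay node $a_v$ for each $v\in V(H)$, and a sink node $b_u$ for each $u\in V(H)$, with edges $r\to a_v$ for all $v$ and $a_v\to b_u$ whenever $u\in\{v\}\cup N_H(v)$. Set $c_r\geq M$, $c_{a_v}=1$, declare all pairs of nodes mutually non-interfering, place $M$ packets at $r$, and take horizon $T=2$. In slot $1$ the source transmits all $M$ packets, so every $a_v$ receives them. In slot $2$ each $a_v$ forwards exactly one packet $\sigma(v)\in[M]$, and by the point-to-multipoint rule this transmission is heard by precisely the sinks $b_u$ with $u\in\{v\}\cup N_H(v)$. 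Hence $b_u$ receives packet $p$ iff $\sigma^{-1}(p)\cap(\{u\}\cup N_H(u))\neq\emptyset$, so every sink receives every packet iff each class $\sigma^{-1}(p)$ dominates $H$, i.e., the collection $\{\sigma^{-1}(p)\}_{p=1}^{M}$ is a domatic partition of $H$.

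The main obstacle will be ensuring the scheduling-level degree of freedom (which packet each relay forwards at slot $2$) corresponds exactly to a dominating-set partition. The tight capacity $c_{a_v}=1$ together with horizon $T=2$ pins $\sigma$ as the only nontrivial choice, and since the argument never invokes the independent-set activation constraint, the reduction establishes hardness even for the interference-free subclass, matching the paper's claim. The one bookkeeping point is that $r$ must push all $M$ packets into the network in one slot; either take $c_r\geq M$ or pad $T$ by an extra slot and shift the analysis accordingly.
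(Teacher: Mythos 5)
Your proposal is correct, and both the \textbf{NP}-membership argument and the overall gadget architecture (a three-layer DAG with a high-capacity source, capacity-one relays forced to choose a single packet in slot $2$, and a bottom layer that tests whether every choice class ``covers'' it) match the paper's. The difference is the source problem: the paper reduces from \textsc{Monotone Not All Equal 3-SAT} with exactly $M=2$ packets, a capacity-$2$ source, and clause nodes of in-degree $3$, whereas you reduce from \textsc{Domatic Number}, which in hindsight is arguably the more natural choice --- it makes explicit that the combinatorial core of finite-horizon wireless broadcast is precisely the partition of the relay layer into per-packet dominating sets, and the correspondence $\sigma^{-1}(p) \leftrightarrow$ dominating set is immediate (your partition-versus-disjointness point is harmless, since disjoint dominating sets always extend to a domatic partition by absorbing idle relays). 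What the paper's reduction buys in exchange is a slightly sharper set of corollaries: hardness with only two packets and with in-degree bounded by $3$; your version with the $K=3$ form of \textsc{Domatic Number} gives three packets and unbounded in-degree (the closed neighborhoods of $H$). Both reductions use non-interfering transmissions, a DAG topology, and at most two distinct capacity values, so all the restrictions listed in the paper's corollary are preserved either way. One small bookkeeping remark: in the backward direction you should state explicitly (as you gesture at) that no relay can transmit in slot $1$ and no packet injected by the source in slot $2$ can reach a sink by time $T=2$, which is what forces the source to emit all $M$ packets in slot $1$ and reduces the entire schedule to the single map $\sigma$; with that sentence added the argument is complete.
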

 \end{framed}
 \editing{Proof of Theorem \ref{NP_hard_result} is based on reduction from the the \textbf{NP}-complete problem \textsf{Monotone Not All Equal $3$-SAT} \cite{schaefer1978complexity} to the \textsf{Wireless Broadcast} problem. The complete proof of the Theorem is provided in Appendix \ref{hardness_proof}. }\\
Note that the problem for $T=1$ is trivial as only the out-neighbors of the source receive $\min(C,M)$ packets at the end of the first slot. The problem becomes non-trivial  for any $T\geq 2$. In our reduction, we show that the problem is hard even for $T=2$. This reduction technique may be extended in a straightforward fashion to show that the problem remains \textbf{NP}-complete for any fixed $T\geq 2$.\\ 
% \subsubsection*{Strengthening of the result:}
% As a corollary of the reduction used for the above hardness result, it follows that the problem \textsf{Wireless Broadcast} remains \textbf{NP}-complete even with the following restrictions:
% \begin{enumerate}
% \item The wireless transmissions are non-interfering.
% %, \emph{e.g.}, when each node transmits over a different channel (c.f. \cite{edmonds}). 
% \item The graph $\mathcal{G}(V,E)$ is a DAG (c.f. \cite{sinha2015throughput}, \cite{sinha2016throughput2}).
% \item Capacity of each node can take any one of two different values. 
% %\item There are at most two distinct transmission capacities among all nodes.
% \item The in-degree of each node is at most $3$.
% %\item The number of messages $M$ and the deadline $T$ are bounded by universal constants.
% \end{enumerate} 
%\paragraph{Discussion}
The above hardness result is in sharp contrast with the efficient solvability of the broadcast problem in the setting of point-to-point channels. In wired networks, the broadcast capacity can be achieved by routing packets using maximal edge-disjoint spanning trees, which can be efficiently computed using Edmonds' algorithm \cite{edmonds}. In a recent series of papers \cite{sinha2015throughput} \cite{sinha2016throughput}, we proposed efficient throughput-optimal algorithms for wireless DAG networks in the static and time-varying settings. In a follow-up paper \cite{sinha2016throughput2}, the above line of work was extended to networks with arbitrary topology. In contrast, Theorem \ref{NP_hard_result} and its corollary establishes that achieving the broadcast capacity in a wireless network with broadcast channel is intractable even for a simple network topology, such as a DAG. Also notice that this hardness result is inherently different from the hardness result of \cite{shah2011hardness}, where the difficulty stems from the hardness of max-weight node activations, which is an Independent Set problem. The above result should also be contrasted with the hardness of the minimum energy broadcast problem \cite{clementi2001complexity}.

%\subsection{Characterization of the Broadcast Capacity}
%characterize the broadcast capacity using a stationary randomized policy (TBD). 

%\input{overview}
\section{Throughput-Optimal Broadcast Policy for a Relaxed Network} \label{bcast_algorithms}
%  In this section we derive a throughput-optimal online dynamic wireless broadcast policy $\pi^*$. The policy has a Max-Weight flavor but is drastically different from the traditional Back-Pressure like policies. In particular, unlike Back-pressure policy which makes packet forwarding decision for each packet hop-by-hop, the policy $\pi^*$ dynamically fixes the routing of a particular packet at the source. Also, unlike backpressure like algorithms which bases its decisions based on physical queue lengths and may be viewed as a \emph{closed loop} control, the policy $\pi^*$ bases its decisions based on a virtual queue length and may be viewed partly as an \emph{open loop} control. However, although it is oblivious to the physical queues, it does make control decisions based on virtual queues which act as a simple \emph{proxy} for the physical queues.  The main advantage of the virtual queues is that they are essentially single-hop queues which make their analysis tractable as opposed to the physical queues. There are several other theoretical and practical advantages associated with this approach, which will be discussed subsequently. 
%  
%\subsection{Overview}
 In this section, we give a brief outline of the design of the proposed broadcast policy, which will be described in detail in the subsequent sections. At a high level, the proposed policy consists of two interdependent modules - a control policy for a \emph{precedence-relaxed} virtual network described below, and a control policy for the actual physical network, described in Section \ref{physical_queue_stability}. Although, from a practical point of view, we are ultimately interested in the optimal control policy for the physical network, as we will soon see, this control policy is intimately related to, and derived from the dynamics of the relaxed virtual network. The concept of a precedence relaxed virtual network was first introduced in our recent paper \cite{sinha2016optimal}.
%The policy will be developed in the following sections.  
 
 %\subsection{System Model}
%Consider a network $\mathcal{G}(V,E)$, whose edge $e$ has capacity $c_e$ while active. \\
%The algorithm that we will be describing here extends readily to inter-mix of unicast, broadcast and multicast traffic with multiple sources. Virtually all network flow problems (e.g., Unicast, Broadcast, Multicast and Anycast) can be solved by this single algorithm.
 \subsection{Virtual Network and Virtual Queues}
In this section we  define and analyze the dynamics of an auxiliary virtual queueing process $\{\bm{\tilde{Q}}(t)\}_{t \geq 0}$. Our throughput-optimal broadcast policy $\pi^*$ will be described in terms of the virtual queues. We emphasize that virtual queues are not physical entities and they do not contain any physical packet. They are constructed solely for the purpose of designing a throughput-optimal policy for the physical network, which depends only on the value of the virtual queue lengths. More interestingly, the designed virtual queues correspond to a fairly natural \emph{single-hop} relaxation of the \emph{multi-hop} physical network, as detailed below.

%The virtual queue is an $n$-dimensional vector of non-negative integers, maintained at the source node $\texttt{r}$. Its components are iteratively updated at every slot, according to a control policy \textsf{UMW}, which will be described subsequently. 
\subsubsection*{A Precedence-relaxed System}
Consider an incoming packet $p$ arriving at the source, which is to be broadcasted through a sequence of transmissions by nodes in a connected dominating set $D_p \in \mathcal{D}$. Appropriate choice of the set $D_p$ is a part of our policy and will be discussed shortly. In reality, the packet $p$ cannot be transmitted by a non-source node $v \in D_p$ at time $t$ if it has not already reached the node $v$ by the time $t$. This causality constraint is known as the \emph{precedence constraint} in the literature \cite{lenstra1978complexity}. We obtain the virtual queue process $\bm{\tilde{Q}}(t)$ by relaxing the precedence constraint, i.e., in the virtual queuing system, the packet $p$ is made available for transmission by all nodes in the set $D_p$ when the packet first arrives at the source. See Figure \ref{VQ_fig1} for an illustration.

\begin{figure} [!ht] 
%\hspace*{-0.2cm}
\centering
\vspace{5pt}
\hspace{-20pt}
\begin{overpic}[width=0.48\textwidth]{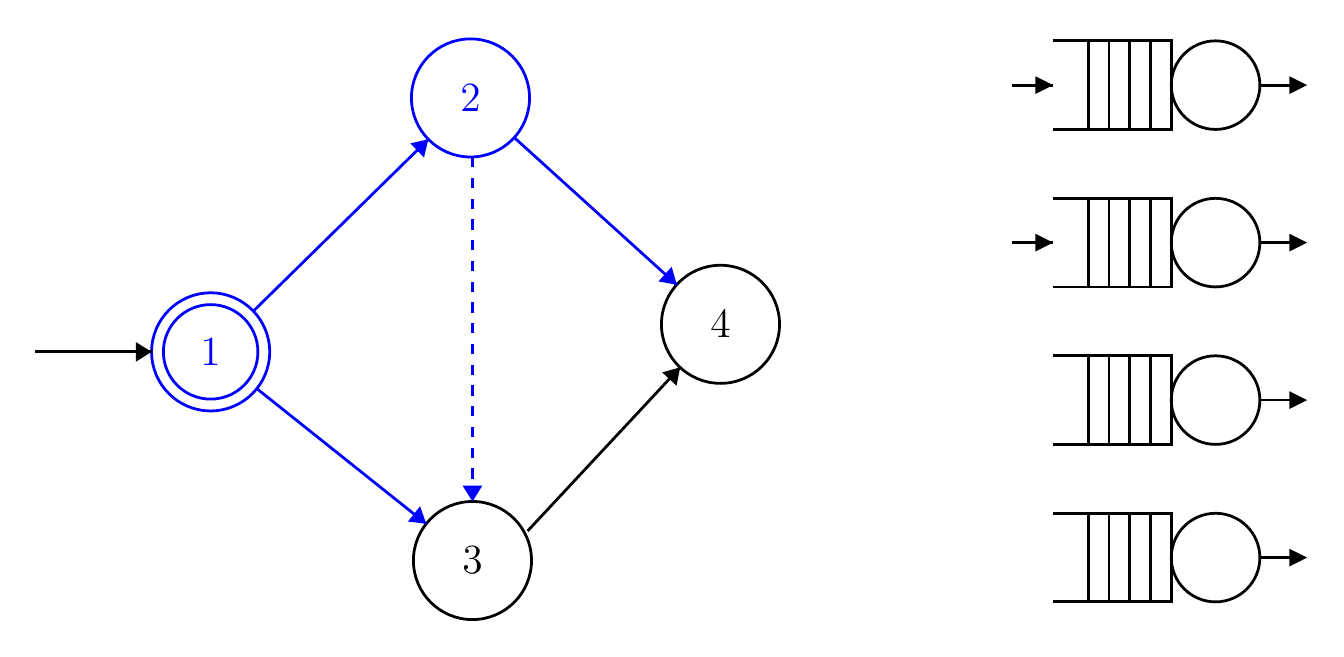}
%\put(11, 20.5){$\texttt{\textbf{\textcolor{blue}{r}}}$} 
%\put(50, 20.5){$c$} 
%\put(31.5, 33){$\textbf{\textcolor{blue}{a}}$} 
%\put(31.5, 7.5){$b$} 
\put(5, 25){$p$} 
\put(94, 47){$\scriptsize{\mu_{1}(t)}$} 
\put(94, 34){$\scriptsize{\mu_2(t)}$} 
\put(94, 22){$\scriptsize{\mu_3(t)}$} 
\put(94, 10.5){$\scriptsize{\mu_4(t)}$} 
\put(68.5, 35.7){$\tiny{\tilde{Q}_1(t)}$} 
\put(68.5, 24.5){$\scriptsize{\tilde{Q}_2(t)}$} 
\put(68.5, 13){$\scriptsize{\tilde{Q}_3(t)}$} 
\put(68.4, 2){$\scriptsize{\tilde{Q}_4(t)}$} 
\put(72.5, 42){$\textcolor{blue}{\scriptsize{\tilde{p}}}$} 
\put(72.5, 30){$\textcolor{blue}{\tilde{p}}$} 
%\put(78,47){$\textcolor{blue}{+1}$}
%\put(78,35.5){$\textcolor{blue}{+1}$}
\put(21, -8){\scriptsize{A Wireless Network $\mathcal{G}$}}
\put(70, -8){\scriptsize{Virtual Queues}}
\put(43, -1){{$D_p=\{1, 2\}$}}
\end{overpic}
\vspace{20pt}
\caption{\small{Illustration of the virtual queue system for the four-node wireless network $\mathcal{G}$. Upon arrival, the incoming packet $p$ is prescribed a connected dominating set $D_p=\{1,2\}$, which is used for its broadcasting. Relaxing the precedence constraint, packet $p$ is counted as an arrival to the virtual queues $\tilde{Q}_\texttt{1}$ and $\tilde{Q}_2$ at the \emph{same slot}. In the physical system, the packet $p$ may take a while before reaching node $2$, depending on the control policy.}}
\label{VQ_fig1}
\end{figure}

\subsubsection*{Dynamics of the Virtual Queues}

Formally, for each node $i \in V$, we define a virtual queue variable $\tilde{Q}_i(t)$. As described above, on the arrival of an external packet $p$ at the source $\texttt{r}$, the packet is replicated to a set of virtual queues $\{\tilde{Q}_i(t), i \in D_p\}$, where $D_p \in \mathcal{D}$ is a connected dominating set of the graph. Mathematically, this operation means that all virtual queue-counters in the set $D_p$ are incremented by the number of external arrivals at the slot $t$. We will use the control variable $A_i(t)$ to denote the number of packets that were routed to the virtual queue $\tilde{Q}_i$ at time $t$. The service rate $\bm{\mu}(t)$ allocated to the virtual queues is required to satisfy the same interference constraint as the physical network, i.e. $\bm{\mu}(t) \in \mathcal{M}, \forall t$.
%A feasible service rate of the virtual queues at the time $t$ is selected from the set $\mathcal{M}$. 
Hence, we can write the one step dynamics of the virtual queues as follows:
\begin{eqnarray} \label{queue-evolution}
\tilde{Q}_i(t+1) = (\tilde{Q}_i(t)+A_i(t)-\mu_i(t))^+, \hspace{10pt} \forall i \in V
\end{eqnarray}
\subsection{Dynamic Control of Virtual Queues}
In this section, we design a dynamic control policy to stabilize the virtual queues for all arrival rates $\lambda< \lambda^*$. This policy takes action (choosing the routes of the incoming packets and selecting a feasible transmission schedule) by observing the virtual queue-lengths only and, unlike popular unicast policies such as \emph{Backpressure}, does not require physical queue information. This control policy is obtained by minimizing one-step expected drift of an appropriately chosen Lyapunov function as described below. In the next section we will show how to combine this control policy for the virtual queues with an appropriate packet scheduling policy for the physical networks, so that the overall policy is throughput-optimal. \\
Consider the Lyapunov function $L(\cdot)$ defined as the Eucledian $2$-norm of the virtual queue lengths, i.e., 
 \begin{eqnarray}
  L(\bm{\tilde{Q}(t)})=  ||\bm{\tilde{Q}(t)}||_2= \sqrt{\sum_i \tilde{Q}_i^2(t)}
 \end{eqnarray}
 The one step drift $\Delta(t)$ of the Lyapunov function may be bounded as follows:
 
 \begin{eqnarray}  \label{drift_bound2}
  \Delta(t) &\equiv&   L(\bm{\tilde{Q}(t+1)})-L(\bm{\tilde{Q}(t)}) \nonumber \\
  %&=& ||\bm{\tilde{Q}}(t+1)||- ||\bm{\tilde{Q}}(t)||
  &=& \sqrt{\sum_i \tilde{Q}_i^2(t+1)}- \sqrt{\sum_i \tilde{Q}_i^2(t)}
%  &\stackrel{(a)}{=}& \frac{\sum_i \big(\tilde{Q}_i^2(t+1) - \tilde{Q}_i^2(t)\big)}{\sqrt{\sum_i \tilde{Q}_i^2(t+1)}+\sqrt{\sum_i \tilde{Q}_i^2(t)}} \nonumber \\
%  &\stackrel{(b)}{\leq}& \frac{1}{||\bm{\tilde{Q}}(t)||}\sum_i \big(\tilde{Q}_i^2(t+1) - \tilde{Q}_i^2(t)\big),\label{drift_bound}
 \end{eqnarray}
 To bound this quantity, notice that for any $x\geq 0$ and $y>0$, we have 
 \begin{eqnarray} \label{ineq1}
 	\sqrt{x}-\sqrt{y} \leq \frac{x-y}{2\sqrt{y}}
 \end{eqnarray}
 The inequality above follows by noting that RHS minus LHS is non-negative. 
% The inequality above simply follows by computing the difference of the RHS and LHS as follows:
% \begin{eqnarray*}
% 	\frac{x-y}{2\sqrt{y}}-(\sqrt{x}-\sqrt{y})&=& (\sqrt{x}-\sqrt{y})(\frac{\sqrt{x}+\sqrt{y}}{2\sqrt{y}}-1)\\
% 	&=& \frac{(\sqrt{x}-\sqrt{y})^2}{2\sqrt{y}} \geq 0
% \end{eqnarray*}
Substituting $x=||\bm{\tilde{Q}}(t+1)||^2$ and
$y=||\bm{\tilde{Q}}(t)||^2$ in the inequality \eqref{ineq1}, we have the following bound on the one-step drift \eqref{drift_bound2} for any $||\bm{\tilde{Q}}(t)||>0$
\begin{eqnarray} \label{drift_bound}
	\Delta(t) \leq \frac{1}{2||\bm{\tilde{Q}}(t)||}\bigg(\sum_i \big(\tilde{Q}_i^2(t+1)-\tilde{Q}_i^2(t) \big)\bigg)
\end{eqnarray}
%where, the equality $(a)$ follows after multiplying the numerator and denominator by $ \sqrt{\sum_i \tilde{Q}_i^2(t+1)}+\sqrt{\sum_i \tilde{Q}_i^2(t)}$ and the upper-bound $(b)$ follows by dropping the non-negative term $\sqrt{\sum_i \tilde{Q}_i^2(t+1)}$ from the denominator. 
%

% Consider the quadratic Lyapunov function $L(\bm{Q}(t))$ defined as a function of the queue-length of the virtual queues:
% \begin{eqnarray*}
% 	L(\bm{Q}(t))= \sum_{i \in V} Q_i^2(t) 
% \end{eqnarray*}
From the virtual queue dynamics \eqref{queue-evolution}, we have:
\begin{eqnarray*}
\tilde{Q}_i(t+1)^2 &\leq& (\tilde{Q}_i(t)-\mu_i(t)+A_i(t))^2 \\
&=& \tilde{Q}_i^2(t) + A_i^2(t)+\mu_i^2(t) +2\tilde{Q}_i(t)A_i(t)\\
&-& 2\tilde{Q}_i(t)\mu_i(t)-2\mu_i(t)A_i(t)
\end{eqnarray*}
Since $\mu_i(t) \geq 0$ and $A_i(t) \geq 0$, we have 
\begin{eqnarray} 
	\tilde{Q}_i^2(t+1) - \tilde{Q}_i^2(t) \leq A_i^2(t) + \mu_i^2(t) \nonumber \\
	+2\tilde{Q}_i(t)A_i(t)-2\tilde{Q}_i(t)\mu_i(t)\label{dyn_bound}
\end{eqnarray}
Hence, combining Eqns. \eqref{drift_bound} and \eqref{dyn_bound}, the one-step Lyapunov drift, conditional on the current virtual queue-length $\bm{\tilde{Q}}(t)$, under the action of an admissible policy $\pi$ is upper-bounded as: 
\begin{eqnarray}
	&&\mathbb{E}(\Delta^\pi(t)|\bm{\tilde{Q}}(t)=\bm{\tilde{Q}}) \nonumber \\
	&\stackrel{\mathrm{(def)}}{=}&\mathbb{E}\big(L(\bm{\tilde{Q}}(t+1))-L(\bm{\tilde{Q}}(t))|\bm{\tilde{Q}}(t)=\bm{\tilde{Q}}\big)\nonumber \\
	&\leq& \frac{1}{2||\bm{\tilde{Q}}||}\bigg(B + 2 \underbrace{\sum_{i}\tilde{Q}_i(t)\mathbb{E}\big(A_i^\pi(t)|\bm{\tilde{Q}}(t)=\bm{\tilde{Q}}\big)}_{(a)} \nonumber \\
	&-& 2 \underbrace{\sum_{i}\tilde{Q}_i(t)\mathbb{E}\big(\mu_i^\pi(t)|\bm{\tilde{Q}}(t)=\bm{\tilde{Q}}\big)}_{(b)} \bigg) \label{UMW_def}
\end{eqnarray} 
where the constant $B= \sum_{i}(\mathbb{E}A_i^2(t)+\mathbb{E}\mu_i^2(t))\leq n(\mathbb{E}A^2+c^2_{\max})$. 
%The expectation is taken with respect to the random arrivals and possible randomness in the policy $\pi$. \\
%The first expectation is with respect to the random arrival process and possible randomness in the policy $\pi$ and the second expectation is with respect to possible randomness in the policy $\pi$. \\
% Hence, by Foster-Lyapunov theorem, the virtual queue-length process $\{\bm{Q}(t)\}_{t \geq 0}$, which constitutes a Markov Chain under the policy \textsf{UMW}, is positive-recurrent. 
By minimizing the upper-bound on drift from Eqn. \eqref{UMW_def}, and exploiting the separable nature of the objective, we obtain the following control policy for the virtual queues:\\\\
\textbf{\textbf{U}niversal \textbf{M}ax \textbf{W}eight (\textsf{UMW}) policy for the Virtual Queues} \\

%\begin{framed}
 \textsc{\textbf{1. Route Selection}}: We minimize the term (a) in the above with respect to all feasible controls to obtain the following routing policy: Route the incoming packet at time $t$ along the minimum-weight connected dominating set (MCDS) $D^{\textrm{UMW}}(t)$, where the nodes are weighted by the virtual queue-lengths $\bm{\tilde{Q}}(t)$, i.e., 
 \begin{eqnarray}\label{routing_dec}
  D^{\textrm{UMW}}(t) = \arg\min_{D \in \mathcal{D}} \sum_{i \in V} \tilde{Q}_i(t) \mathds{1}(i\in \mathcal{D}) 
 \end{eqnarray} 
 \textsc{\textbf{2. Node Activations}}: We maximize the term (b) in the above with respect to all feasible controls to obtain the following node scheduling policy: At time $t$ activate a feasible schedule $\bm{\mu}^{\textrm{UMW}}(t)$ having the maximum weight, where the nodes are weighted by the virtual queue-lengths $\bm{\tilde{Q}}(t)$, i.e., 
 \begin{eqnarray} \label{sched_dec}
  M^{\textrm{UMW}}(t) = \arg\max_{M \in \mathcal{M}} \sum_{i \in V} \tilde{Q}_i(t) c_i \mathds{1}(i\in M) 
 \end{eqnarray}
%\end{framed}
%The pseudocode of the virtual queue management policy is described in Algorithm \ref{UMW_algo}.

%We consider the following activation and scheduling policy, which we call \textsf{UMW} (\textbf{U}niversal \textbf{M}ax-\textbf{W}eight Algorithm). This algorithm consists of two major components (a) Virtual Queue Management and (b) Link Scheduling and Routing of physical packets. 

%\begin{framed}
%\begin{itemize}
%\item \textbf{[Edge-Weight Assignment]} Assign each edge of a cost equal to $Q_e(t)$ at time slot $t$.
%\item \textbf{[Routing: MST] } Once a packet comes find a Minimum weight spanning tree $T$ according to these weights. Route the packet over the tree $T$.
%\item \textbf{[Scheduling: Max-Weight]} At a given time slot activate a subset of edges in the activation set $\mathcal{M}$ which maximizes the weight. 
%\begin{itemize}
%\item \textbf{[Virtual Queue Update: Arrival]} Increment the virtual queue counters by the number of packet arrivals on those queues corresponding to the edges of the tree $T$.
%\item \textbf{[Virtual Queue Update: Departure]} Decrement (up to zero) the counter of $Q_e(t)$ by $\mu_e(t)$. 
%\end{itemize}
%\end{itemize}
%\end{framed}
%The following theorem is fundamental in connection with the analysis of the \textsf{UMW} policy:
In connection with the virtual queue systems $\bm{\tilde{Q}}(t)$, we establish the following theorem which will be essential in the proof of the throughput-optimality of the overall algorithm involving physical queues. 
\begin{framed}
\begin{theorem} \label{VQ_stability}
For any arrival rate $\lambda < \lambda^*$ the virtual queue process $\{\bm{Q}(t)\}_{t \geq 0}$ is positive recurrent under the action of the \textsf{UMW} policy and 
\begin{equation*} \label{log_bdd}
 \max_i \tilde{Q}_i(t) = \mathcal{O}(\log t)\footnotemark, \hspace{10pt} \textrm{w.p.} \hspace{2pt} 1.
\end{equation*}
%\textbf{strongly-stable} under the action of the \textsf{UMW} policy, i.e., 
% \begin{eqnarray*}
%  \limsup_{T\to \infty}\frac{1}{T}\sum_{t=1}^{T} \sum_{i=1}^{n}\mathbb{E}Q_i(t)  \leq B, 
% \end{eqnarray*}
% for some finite constant $B$.
\end{theorem}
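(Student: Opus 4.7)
\textbf{Proof Plan for Theorem \ref{VQ_stability}.}

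The plan is to combine a Foster--Lyapunov negative-drift argument with a stronger exponential moment bound to obtain the almost-sure logarithmic growth. First I would establish a characterization of the broadcast capacity $\lambda^{*}$ in terms of the virtual network: namely, for every $\lambda < \lambda^{*}$ there exist $\varepsilon>0$, a probability distribution $\{\alpha_{D}\}_{D\in\mathcal{D}}$ over minimal connected dominating sets, and a probability distribution $\{\beta_{M}\}_{M\in\mathcal{M}}$ over feasible schedules, such that for every node $i\in V$,
\begin{equation*}
\lambda \sum_{D\in\mathcal{D}:\, i\in D} \alpha_{D} \;+\; \varepsilon \;\le\; c_{i}\sum_{M\in\mathcal{M}:\, i\in M} \beta_{M}.
\end{equation*}
The ``$\le$'' direction follows because any admissible physical broadcast policy of rate $\lambda$ induces, via its empirical route- and schedule-frequencies, a randomized policy on the relaxed virtual network with arrival rate $\le$ service rate at every node; the converse (existence of slack) is standard once we perturb $\lambda$ slightly below $\lambda^{*}$.

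Next, I would plug this randomized stationary policy $\pi^{\mathrm{RAND}}$ into the drift bound \eqref{UMW_def}. Under $\pi^{\mathrm{RAND}}$, the terms $(a)$ and $(b)$ become $\sum_{i}\tilde{Q}_{i}\lambda\sum_{D\ni i}\alpha_{D}$ and $\sum_{i}\tilde{Q}_{i} c_{i}\sum_{M\ni i}\beta_{M}$, respectively, so the capacity inequality yields
\begin{equation*}
\mathbb{E}\bigl(\Delta^{\pi^{\mathrm{RAND}}}(t)\,\big|\,\bm{\tilde{Q}}(t)=\bm{\tilde{Q}}\bigr) \;\le\; \frac{B}{2\|\bm{\tilde{Q}}\|} \;-\; \frac{\varepsilon\sum_{i}\tilde{Q}_{i}}{\|\bm{\tilde{Q}}\|} \;\le\; \frac{B}{2\|\bm{\tilde{Q}}\|} - \varepsilon.
\end{equation*}
Crucially, UMW was defined to \emph{minimize} the right-hand side of \eqref{UMW_def} over all admissible controls, so the same bound holds for UMW: $\mathbb{E}(\Delta^{\mathrm{UMW}}(t)\mid \bm{\tilde{Q}}(t)=\bm{\tilde{Q}}) \le B/(2\|\bm{\tilde{Q}}\|)-\varepsilon$. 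For $\|\bm{\tilde{Q}}\|$ larger than $B/\varepsilon$, this is bounded above by $-\varepsilon/2$, so Foster's criterion applied to $L(\bm{\tilde{Q}})=\|\bm{\tilde{Q}}\|_{2}$ yields positive recurrence of the Markov chain $\{\bm{\tilde{Q}}(t)\}$.

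For the almost-sure $O(\log t)$ bound, the Foster criterion alone is not enough; I would invoke Hajek's lemma on exponential moments of Markov chains with negative drift (Hajek 1982). Since the per-step increments $|L(\bm{\tilde{Q}}(t+1))-L(\bm{\tilde{Q}}(t))|$ are bounded by a constant $K$ (depending only on $\max A(t)$, $c_{\max}$, and $n$) and the conditional drift is bounded above by $-\varepsilon/2$ whenever $L(\bm{\tilde{Q}}(t))> B/\varepsilon$, Hajek's lemma gives constants $\theta>0$ and $C<\infty$ such that $\mathbb{E}\exp(\theta L(\bm{\tilde{Q}}(t))) \le C$ for all $t\ge 0$ (starting from an empty state). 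By Markov's inequality this gives $\Pr(L(\bm{\tilde{Q}}(t))> \gamma\log t) \le C\, t^{-\theta\gamma}$; choosing $\gamma$ large enough that $\theta\gamma>1$ and applying Borel--Cantelli over $t\in\mathbb{N}$ yields $L(\bm{\tilde{Q}}(t))=O(\log t)$ almost surely, and hence $\max_{i}\tilde{Q}_{i}(t)\le L(\bm{\tilde{Q}}(t)) = O(\log t)$ w.p.\ 1.

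The main obstacle I expect is the capacity-region characterization in the first step: one has to argue carefully that the relaxation of precedence constraints does not enlarge the capacity region, i.e.\ that any rate achievable in the physical multi-hop network is also achievable by a stationary randomized routing-plus-scheduling policy on the single-hop virtual network. This requires extracting a convergent subsequence of empirical route frequencies from a hypothetical $\lambda$-achieving physical policy and showing that every broadcasted packet must have traversed some connected dominating set, so its route contributes non-trivially to the $\alpha_{D}$'s.
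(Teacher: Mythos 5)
Your proposal follows essentially the same route as the paper's proof: compare \textsf{UMW}'s drift against a stationary randomized policy with slack $\varepsilon$ obtained from $\lambda<\lambda^{*}$, verify negative drift and bounded increments for the Lyapunov function $\|\bm{\tilde{Q}}\|_{2}$, invoke Hajek's theorem for uniform exponential moments, and convert these to the almost-sure $\mathcal{O}(\log t)$ bound via Markov's inequality and Borel--Cantelli. The approach and all key steps match; if anything, your explicit attention to why the precedence-relaxed virtual network inherits the achievability of every $\lambda<\lambda^{*}$ is a point the paper treats only implicitly.
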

\end{framed}
\footnotetext{Recall that, $f(t)=\mathcal{O}(g(t))$ if there exist a positive constant $c$ and a finite time $t_0$ such that $f(t)\leq c g(t), \forall t\geq t_0$.}
The proof of Theorem \ref{VQ_stability} involves construction of an efficient randomized policy and using it with a sharper form of the Foster-Lyapunov theorem by Hajek \cite{hajek1982hitting}. This leads to the desired sample path result. The proof is provided in Appendix \ref{stability_proof_VQ}.
\paragraph*{Discussion of the Result}
Even though the virual queue process is positive recurrent under the action of the \textsf{UMW} policy, it is not true that they are uniformly bounded almost surely. Theorem \ref{VQ_stability} states that, instead, the virtual queue lengths increase at most logarithmically with time almost surely. Theorem \ref{VQ_stability} also strengthens the result of Theorem 2.8 of \cite{neely2010stochastic}, where an almost sure $o(t)$ bound was established for the queue lengths\footnote{We say $f(t)=o(g(t))$ if $\lim_{t \to \infty}\frac{f(t)}{g(t)}=0$.}. \\
In the rest of the paper, we will primarily focus on the typical sample paths $\mathcal{E}$ of the virtual queue process satisfying the above almost sure bound.  Formally, we define the set $\mathcal{E}$ to be
\begin{eqnarray} \label{q_bd}
 \max_i \tilde{Q}_i(\omega, t) = \mathcal{O}(\log(t)), \hspace{10pt} \forall \omega \in \mathcal{E},
\end{eqnarray}
where $\mathbb{P}(\mathcal{E})=1$ from Theorem \ref{VQ_stability}.

\subsection{Bounds on the Virtual Queue }
%Denote the total number of packet arrivals to the virtual queue $\tilde{Q}_i$ within the time interval $(\tau, t], \tau \leq t$ by $\tilde{A}_i(\tau, t)$. 
Recall that the random variable $A_i(t)$ denotes the total number of packets injected to the virtual queue $\tilde{Q}_i$ at time $t$. Similarly, the random variable $\mu_i(t)$ denotes the service rate from the virtual queue $\tilde{Q}_i$ at time $t$. Hence, the total number of packets that have been injected into any virtual queue $\tilde{Q}_i$ within the time interval $[t_1, t_2)$, $t_1 \leq t_2$ is given by 
\begin{eqnarray} \label{arr_vq}
 A_i(t_1, t_2) = \sum_{\tau=t_1}^{t_2-1} A_i(\tau).
\end{eqnarray}
Similarly, the total amount of service offered to the virtual queue $\tilde{Q}_i$ within the time interval $[t_1, t_2)$ is given by 
\begin{eqnarray} \label{ser_vq}
 S_i(t_1,t_2)= \sum_{\tau=t_1}^{t_2-1} \mu_i(\tau).
\end{eqnarray}
Using the well-known Skorokhod representation theorem \cite{kleinrock1975queueing} of the Queueing recursion \eqref{queue-evolution}, we have \footnote{Note that, for simplicity of notation and without any loss of generality, we have assumed the system to be empty at time $t=0$.}
\begin{eqnarray} \label{skorokhod_map}
	\tilde{Q}_i(t)= \sup_{1\leq \tau \leq t} \big(A_i(\tau,t)-S_i(\tau, t)\big)^+.
\end{eqnarray}
Since the virtual queues $\bm{\tilde{Q}}$ are controlled by the \textsf{UMW} policy, combining Eqn. \eqref{q_bd} with \eqref{skorokhod_map}, we have for all typical sample paths $\omega \in \mathcal{E}$:
\begin{eqnarray} \label{arr_bd}
 A_i(\omega; \tau,t) \leq S_i(\omega; \tau, t) + F(\omega, t), \hspace{10pt} \forall  \tau \leq t, i \in V,
\end{eqnarray}
where $F(\omega, t)= \mathcal{O}(\log t)$. 
In other words, equation \eqref{arr_bd} states that under the \textsf{UMW} policy, for any packet arrival rate $\lambda < \lambda^*$,  the total number of packets that are routed to any virtual queue $\tilde{Q}_i$ may exceed the total amount of service offered to that queue in any time interval $[\tau, t)$ by at most an additive term of $\mathcal{O}(\log t)$ almost surely. In the following section, we will show that this arrival condition enables us to design a throughput-optimal broadcast policy.

%\subsection{Broadcast Policy for the Physical Network}
%Finally, we describe 

%\end{proof}
\section{Control of the Physical Network} \label{physical_queue_stability}
%\section{Optimal Control of the Physical Network} \label{physical_queues}
%Having determined the routing and link activation policy in section \ref{virtual_queues}, in this section we design the third and final component of the \textbf{UMW} policy, namely packet scheduling for the multi-hop physical queues. Since an active edge may transmit only one packet per slot, the packet scheduler should efficiently resolve contention when multiple packets want to cross an edge $e$ at the same time-slot $t$. \\
With the help of the one-hop virtual queue structure designed in the previous section, we now focus our attention on designing a throughput-optimal control policy for the multi-hop physical network. Recall from Section \ref{model} that a broadcast policy for the physical network is specified by the following three components: \textbf{(1)} Route Selection, \textbf{(2)} Node Activation, and \textbf{(3)} Packet Scheduling. In our proposed broadcast policy, components (1) and (2) for the physical network are identical to the corresponding components in the virtual network. In other words, an incoming packet $p$ at time $t$ is prescribed a route (i.e., a connected dominating set) given by Eqn.\ \eqref{routing_dec} and the set of nodes given by Eqn.\ \eqref{sched_dec} are scheduled for transmission in that slot. Note that, both these decisions are based on the instantaneous virtual queue lengths $\bm{\tilde{Q}}(t)$. In particular, it is possible that a particular node, with positive virtual queue length, is scheduled for transmission in a slot, even though it does not have any packets to transmit in its physical queue. The surprising fact, that will follow from Theorem \ref{UMW-optimality}, is that this kind of wasted transmissions are rare and \emph{do not affect throughput}.     \\
% As a direct consequence of the above identification, we conclude that the number of packets, which arrive at the source in the time interval $[\tau, t)$ and whose prescribed route contains the node $i$, is exactly equal to the corresponding arrival in the virtual network $A_i(\tau, t)$, given by Eqn. \eqref{arr_vq}. Similarly, total service offered by the physical node $i$ in the time interval $(\tau, t]$ is exactly given by $S_i(\tau, t)$, defined in Eqn. \eqref{ser_vq}. Thus, the bound in Eqn. \eqref{arr_bd} may be interpreted in terms of the packets arriving in the physical network. \\
\textbf{Packet Scheduling:} There are many possibilities for the component \textbf{(3)}, i.e. Packet scheduling in the physical network. Recall that, the packet scheduling component selects packet(s) to be transmitted (subject to the node capacity constraint) when multiple packets contend for transmission by an active node and plays a role in determining the physical queuing process. In this paper, we consider a priority based scheduler which gives priority to the packet which has been transmitted by the nodes \emph{the least number of times}. We call this scheduling policy \textbf{L}east \textbf{T}ransmitted \textbf{F}irst or \textsf{LTF}. The \textsf{LTF} policy is inspired from the \emph{Nearest To Origin} policy of Gamarnik \cite{Gamarnik}, where it was shown to stabilize the queues for the unicast problem in wired networks in a deterministic adversarial setting. In spite of the high level similarities, however, we emphasize that these two policies are different, as the \textsf{LTF} policy works in the broadcast setting with point-to-multi-point transmissions and involves packet duplications.

%The complete description of the broadcast policy is given in Algorithm \ref{UMW_algo}. 
\begin{framed}
 \begin{definition}[The policy \textsf{LTF}]
  If multiple packets are available for transmission by an active node at the same time slot $t$, the \textsf{LTF} scheduling policy gives priority to a packet which has been transmitted the smallest number of times among all other contending packets. 
 \end{definition}
\end{framed}

See Figure \ref{LTF_fig} for an illustration of the \textsf{LTF} policy. 
%A complete pseudo code for the \textsf{UMW} policy is provided in Algorithm \ref{UMW_algo}. 
 
\begin{figure}
\includegraphics[width=0.45\textwidth]{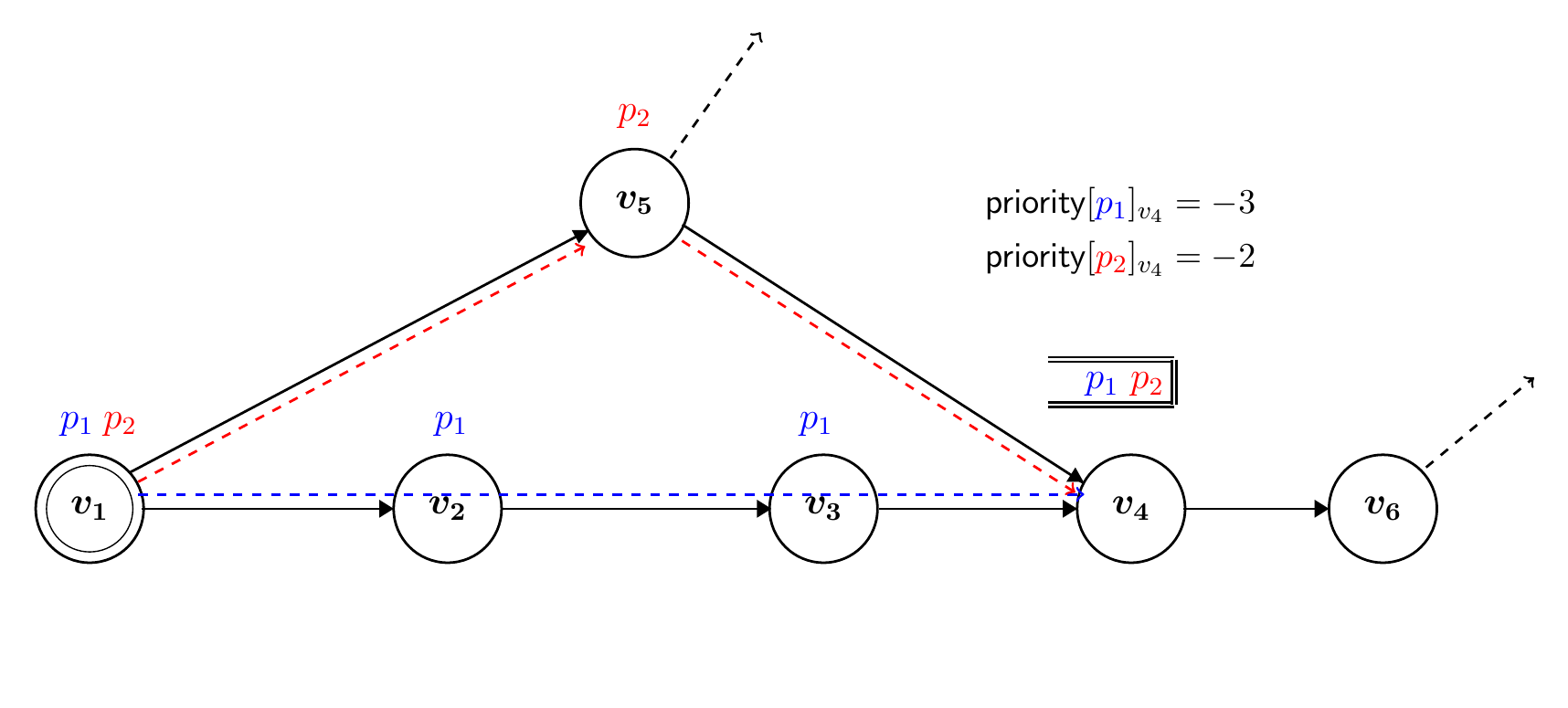}
\caption{\small{A schematic diagram depicting the scheduling policy LTF in action. The packet $p_1$'s broadcast route consists of the nodes $\{v_1,v_2,v_3,v_4, \ldots\}$ and the packet $p_2$'s broadcast route consists of the nodes $\{v_1,v_5, v_4, \ldots\}$ as shown in the figure. At node $v_4$, according to the LTF policy, the packet $p_2$ has higher priority than the packet $p_1$ for transmission. }}
%A schematic diagram showing the scheduling policy LTF in action. The packets $p_1$ and $p_2$ originate from the sources $S_1$ and $S_2$. Part of their assigned routes are shown in blue and red respectively. The packets contend for crossing the active edge $e_3$ at the same time slot. According to the ENTO policy, the packet $p_2$ has higher priority (having crossed a single edge $e_4$ from its source) than $p_1$ (having crossed two edges $e_1$ and $e_2$ from its source) for crossing the edge $e_3$. Note that, although a copy of $p_1$ might have already crossed the edge $e_5$, this edge does not fall in the path connecting the source $S_1$ to the edge $e_3$ and hence does not enter into priority calculations. }}
\label{LTF_fig}
\end{figure}

\subsection{Stability of the Physical Queues}
Let us denote the length of the physical queue at node $i$ at time $t$ by $Q_i(t)$. Note that the number of packets which arrive at the source in the time interval $[\tau, t)$ and whose prescribed route contains the node $i$, is equal to the corresponding arrival in the virtual network $A_i(\tau, t)$, given by Eqn.\ \eqref{arr_vq}. Similarly, total service offered by the physical node $i$ in the time interval $(\tau, t]$ is given by $S_i(\tau, t)$, defined in Eqn.\ \eqref{ser_vq}. Thus, the bound in Eqn.\ \eqref{arr_bd} may be interpreted in terms of the packets arriving to the physical network. This leads to the following theorem: %which establishes the stability of the physical queues under the \textsf{UMW} policy:
\begin{framed}
 \begin{theorem} \label{physical_stability-theorem}
  Under the action of the \textsf{UMW} policy with \textsf{LTF} packet scheduling, we have for any arrival rate $\lambda < \lambda^*$,
  \begin{eqnarray*}
   \sum_{i \in V} Q_i(t) = \mathcal{O}(\log t),  \hspace{10pt} \mathrm{w.p.} \hspace{1pt} 1.
  \end{eqnarray*}
 This implies that,  
  \begin{eqnarray*}
   \lim_{t \to \infty} \frac{\sum_{i \in V}Q_i(t)}{t} =0 , \hspace{10pt} \mathrm{w.p.} \hspace{1pt} 1, 
  \end{eqnarray*}
  i.e., the physical queues are rate-stable \cite{neely2010stochastic}.
\end{theorem}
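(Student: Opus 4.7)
The plan is to combine Theorem~\ref{VQ_stability}'s almost-sure $\mathcal{O}(\log t)$ bound on the virtual queues with an inductive hop-by-hop argument for the physical queues under \textsf{LTF}, in the spirit of Gamarnik's analysis of \emph{Nearest-To-Origin} in adversarial unicast networks. My starting point will be to re-read the bound \eqref{arr_bd} as an adversarial-admissibility property of the \emph{physical} arrival streams: on every typical sample path $\omega \in \mathcal{E}$, every node $i$, and every interval $[\tau, t)$, the cumulative number of packets routed through $i$ exceeds the cumulative scheduled service offered to $i$ by at most $F(\omega,t) = \mathcal{O}(\log t)$.

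Next, I will partition the physical packets by their transmission count, letting $N_k(t)$ denote the number of packets currently in the network with transmission count exactly $k$. Since each prescribed connected dominating set contains at most $n$ nodes, $N_k(t) = 0$ for $k \geq n$ and $\sum_{i\in V}Q_i(t) = \sum_{k=0}^{n-1} N_k(t)$; it therefore suffices to show $N_k(t) = \mathcal{O}(\log t)$ for each $k$, which I will do by induction on $k$. The base case $k = 0$ should be immediate because every count-$0$ packet sits at the source, and a direct comparison of the physical-queue recursion at the source with the virtual-queue recursion \eqref{queue-evolution}---both driven by the identical arrival stream and the identical scheduled service $\mu_\texttt{r}$---yields $Q_\texttt{r}(t) = \tilde{Q}_\texttt{r}(t) = \mathcal{O}(\log t)$. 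For the inductive step I will exploit the defining property of \textsf{LTF}: at any node $i$, count-$k$ packets receive service only after all strictly lower-count packets are drained, so the cumulative service consumed by count-$k$ packets at $i$ during $[\tau,t)$ is at least $S_i(\tau,t)$ minus the service used on strictly lower-count packets, and the latter is controlled by the inductive hypothesis. Since a count-$k$ packet can arrive at $i$ only when a CDS-neighbor of $i$ transmits that same packet while it carries count $k-1$, the arrival stream into the count-$k$ sub-queue at $i$ is dominated by $A_i$, and so by \eqref{arr_bd}; combining the two inequalities yields $N_k(t) = \mathcal{O}(\log t)$.

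The hard part will be the inductive step. The transmission count is a \emph{global} attribute and may be incremented at a node different from the one currently holding a given copy of a packet, because the same packet may sit simultaneously in the physical queues of several CDS-nodes; as a result, the \textsf{LTF} priority at a given node can change due to remote transmissions, and a careful sample-path dominance argument will be required to show that this reshuffling does not spoil the inductive service bound. A secondary concern is to verify that each inductive step inflates the $\mathcal{O}(\log t)$ constant by at most a bounded multiplicative factor, so that after the at most $n$ induction levels the overall bound is still $\mathcal{O}(\log t)$---the finiteness of $n$, and hence of the maximum CDS size, is what makes this work. Once $\sum_i Q_i(t) = \mathcal{O}(\log t)$ almost surely is in hand, rate stability follows immediately by dividing by $t$ and letting $t \to \infty$.
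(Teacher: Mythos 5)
Your proposal follows essentially the same route as the paper: reinterpret the virtual-queue bound \eqref{arr_bd} as an adversarial-admissibility condition on the physical arrival/service streams, stratify packet copies by a transmission-count ``layer,'' and run Gamarnik-style induction over the at most $n$ layers using busy-period arguments and the \textsf{LTF} priority, with the finiteness of $n$ keeping the final bound at $\mathcal{O}(\log t)$. Two points in your sketch deserve tightening. First, the difficulty you flag as ``the hard part'' --- that the transmission count is a global, dynamically changing attribute --- dissolves under the paper's (and the implementation's) actual convention: the layer of the copy of packet $p$ held at node $i$ is the hop-distance of $i$ from the source along the assigned CDS $D_p$, a \emph{static} attribute of the pair $(p,i)$ fixed at routing time, so no remote transmission can reshuffle priorities at $i$ and no extra dominance argument is needed. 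Second, your claim that the arrival stream into the count-$k$ sub-queue at $i$ over a busy period $(t_0,t]$ is ``dominated by $A_i$'' is not literally correct: a layer-$k$ packet appearing at $i$ during $(t_0,t]$ may have entered the network before $t_0$ and been sitting at a lower layer at time $t_0$. The correct accounting bounds the layer-$k$ backlog at $i$ by $\sum_{j<k}B_j(t_0)$ (carry-over, controlled by the inductive hypothesis and monotonicity of the $B_j$) plus the external arrivals in $(t_0,t]$ routed with $i$ as their $k$-th transmitter, and then uses the identity $A_i(t_0,t)=\sum_{j}\sum_{D:\,i\in D_j}A_D(t_0,t)$ together with \eqref{arr_bd} to absorb both the arrival terms and the service blocked by lower layers, yielding the recursion $B_k(t)=2n\sum_{j<k}B_j(t)+nM(t)$. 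With these two corrections your argument coincides with the paper's proof.
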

\end{framed}
%\begin{proof}
Theorem \ref{physical_stability-theorem} is established by combining the key sample path property of arrivals and departures from Eqn.\ \eqref{arr_bd}, with an adversarial queueing theoretic argument of Gamarnik \cite{Gamarnik}. The complete proof of the Theorem is provided in Appendix \ref{physical_stability-proof}.\\  
%\end{proof}
%\vspace{-10pt}
As a direct consequence of Theorem \ref{physical_stability-theorem}, we have the main result of this paper:
\begin{framed}
\begin{theorem} \label{UMW-optimality}
 \textsf{UMW} is a throughput-optimal wireless broadcast policy.
\end{theorem}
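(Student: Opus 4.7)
The plan is to deduce throughput-optimality of \textsf{UMW} directly from the rate-stability guarantee of Theorem \ref{physical_stability-theorem} together with the strong law of large numbers for the arrival process. The whole proof is essentially a bookkeeping argument: almost all arriving packets have been broadcasted, with the gap controlled by the total physical backlog.

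First, I would fix any arrival rate $\lambda < \lambda^*$ and appeal to Theorem \ref{physical_stability-theorem} to get $\sum_{i \in V} Q_i(T) = \mathcal{O}(\log T)$ almost surely under \textsf{UMW} with \textsf{LTF} packet scheduling. Write $A(T)$ for the cumulative number of external arrivals at the source by time $T$ and $R^{\textsf{UMW}}(T) = |B^{\textsf{UMW}}(T)|$ for the count of packets that have reached \emph{every} node by time $T$. Applying the strong law of large numbers (or the ergodic theorem) to the exogenous arrival process gives $A(T)/T \to \lambda$ almost surely.

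The one substantive step is to relate the un-broadcasted count $N(T) := A(T) - R^{\textsf{UMW}}(T)$ to the physical queues. Each arriving packet $p$ is assigned a connected dominating set $D_p \in \mathcal{D}$ by the route-selection rule \eqref{routing_dec}; by definition of a connected dominating set, once every node in $D_p$ has transmitted $p$ at least once, every node of $V$ has received $p$, so $p$ is broadcasted. Consequently, any packet that has arrived but not yet been broadcasted must still reside in the physical queue of at least one node in $D_p$ that has yet to transmit it, and therefore
\begin{equation*}
N(T) \;\leq\; \sum_{i \in V} Q_i(T) \;=\; \mathcal{O}(\log T) \quad \text{w.p. } 1.
\end{equation*}

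Combining the ingredients, $A(T)/T - \sum_i Q_i(T)/T \leq R^{\textsf{UMW}}(T)/T \leq A(T)/T$, and both bounding expressions converge to $\lambda$ almost surely. Hence $\lim_{T\to\infty} R^{\textsf{UMW}}(T)/T = \lambda$ almost surely, i.e., \textsf{UMW} is a broadcast policy of rate $\lambda$ in the sense of \eqref{limit_res}. Because $\lambda < \lambda^*$ was arbitrary and no admissible policy can achieve a rate strictly exceeding $\lambda^*$, the supremum in the definition of broadcast capacity is attained by \textsf{UMW}, which establishes throughput-optimality. The only place where anything could go wrong is the inequality $N(T) \le \sum_i Q_i(T)$; it is not tight (a packet in transit can be counted in several physical queues simultaneously), but this over-counting only helps us, so the crude bound suffices.
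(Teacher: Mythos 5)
Your proposal is correct and follows essentially the same route as the paper: the sandwich $A(0,T)-\sum_{i}Q_i(T)\leq R(T)\leq A(0,T)$ (justified by the observation that an un-broadcasted packet must still have a copy sitting in some physical queue along its assigned connected dominating set), followed by the strong law of large numbers and the $\mathcal{O}(\log T)$ bound of Theorem \ref{physical_stability-theorem}. Your added remarks on the CDS structure and on the harmless over-counting of duplicated copies only make explicit what the paper leaves implicit.
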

\end{framed}
\begin{proof}
%For any class $c \in \mathcal{C}$, the number of packets $R^{(c)}(t)$, received by all nodes $ i \in \mathcal{D}^{(c)}$ may be bounded as follows:
The total number of packets $R(t)$, received by all nodes in common up to time $t$ may be bounded in terms of the physical queue lengths as follows
\begin{eqnarray} \label{bd_eqn1}
  A(0,t) - \sum_{i \in V} {Q}_i(t) \stackrel{(*)}{\leq} R(t) \leq A(0,t),
\end{eqnarray}
where the lower-bound $(*)$ follows from the simple observation that if a packet $p$  has not reached at all nodes in the network, then at least one copy of it must be present in some physical queue. \\ 
Dividing both sides of Eqn.\ \eqref{bd_eqn1} by $t$, taking limits and using the Strong Law of Large Numbers and Theorem \ref{physical_stability-theorem}, we conclude that 
\begin{eqnarray*}
 \lim_{t \to \infty} \frac{R(t)}{t} = \lambda, \hspace{5pt} \textrm{w.p.} 1.
\end{eqnarray*}
Hence, from the definition \eqref{pol_def}, we conclude that \textsf{UMW} is throughput-optimal.
\end{proof}
\subsection*{Efficient Implementation}
It is evident from the description of the \textsf{UMW} policy that the routing and node activation decisions are made using the virtual queue lengths $\tilde{\bm{Q}}(t)$, whereas the physical packet scheduling decisions are based on the contents of the physical queues at each node. In the following, we discuss efficient implementation of each of the three components in detail.
\subsubsection{Routing}
A broadcast route (MCDS) is computed for each packet immediately upon its arrival according to Eqn.\ \eqref{routing_dec}, and copied into its header field. The route selection involves solving an MCDS problem with the nodes weighted by the corresponding virtual queue lengths, which is \textbf{NP}-hard \cite{michael1979computers}. This is consistent with the hardness of the wireless broadcast problem, proved in Theorem \ref{NP_hard_result}. Assuming bi-directional wireless links, a polynomial time  $\mathcal{O}(\log n)$ approximation algorithm for this problem is available for general graphs \cite{guha1998approximation}. Furthermore, constant factor approximation algorithms for this problem are available for unit disk graphs \cite{thai2006connected}. 

%It can be readily shown that using this approximation algorithm instead, one can achieve $\mathcal{O}(\log n)$ fraction of the broadcast capacity. 

\subsubsection{Node Activation}
At every slot a non-interfering subset of nodes is activated by choosing a maximum weight independent set in the conflict graph $\mathcal{C}(\mathcal{G})$, where the nodes are weighted by their corresponding virtual queue lengths, see Eqn.\ \eqref{sched_dec}. The problem of finding a maximum weight independent set in a general graph is known to be \textbf{NP}-hard \cite{michael1979computers}. However, for the special case, such as unit disk graphs, constant factor approximation algorithms are available \cite{das2015approximation}. Note that, the same issue arises in the classical max-weight policies \cite{tassiulas}. \\ 
\editing{By a similar analysis, it can be shown that using an $\alpha \geq 1$ approximation algorithm for routing and $\beta \geq 1$ approximation algorithm for node activation, we can achieve $\frac{1}{\max{(\alpha, \beta)}}$ fraction of the optimal broadcast capacity of the network.}

\subsubsection{Packet Scheduling}
The LTF policy can be efficiently implemented by maintaining a \emph{min-heap} data-structure per node. The initial priority of each incoming packet at the source is set to zero. Once a packet $p$ is received at a node $i$ and the node $i$ is included in its list of required transmitting node, its priority is decreased by one and it is inserted to the min-heap maintained at node $i$. Naturally, a node simply discards multiple receptions of the same packet.\\

\section{Simulation Results} \label{simulation}
\subsection{Interference-free Network}
As a proof of concept, we first simulate the \textbf{UMW} policy in a simple wireless network with known broadcast capacity. Consider the network shown in Figure \ref{sim_net}. Here node $1$ is the source having a transmission capacity $C_1=2$. All other nodes in the network have  unit transmission capacity. Assume that the channels are non-interfering, i.e., all nodes can transmit in a slot (this holds, e.g.,  if the nodes transmit on different frequencies). Since the broadcast capacity of any wireless network is upper-bounded by the capacity of the source, we readily have $\lambda^* \leq 2$. Also, it can be seen from Figure \ref{sim_net} that by transmitting the even numbered packets from nodes $2$ and $5$ (shown in blue) and the odd numbered packets from nodes $3$ and $4$, a broadcast rate of $2$ packets per slot can be achieved. Hence, the broadcast capacity of the network is $\lambda^*=2$. 
\begin{figure}
\centering
\begin{overpic}[width=0.35\textwidth]{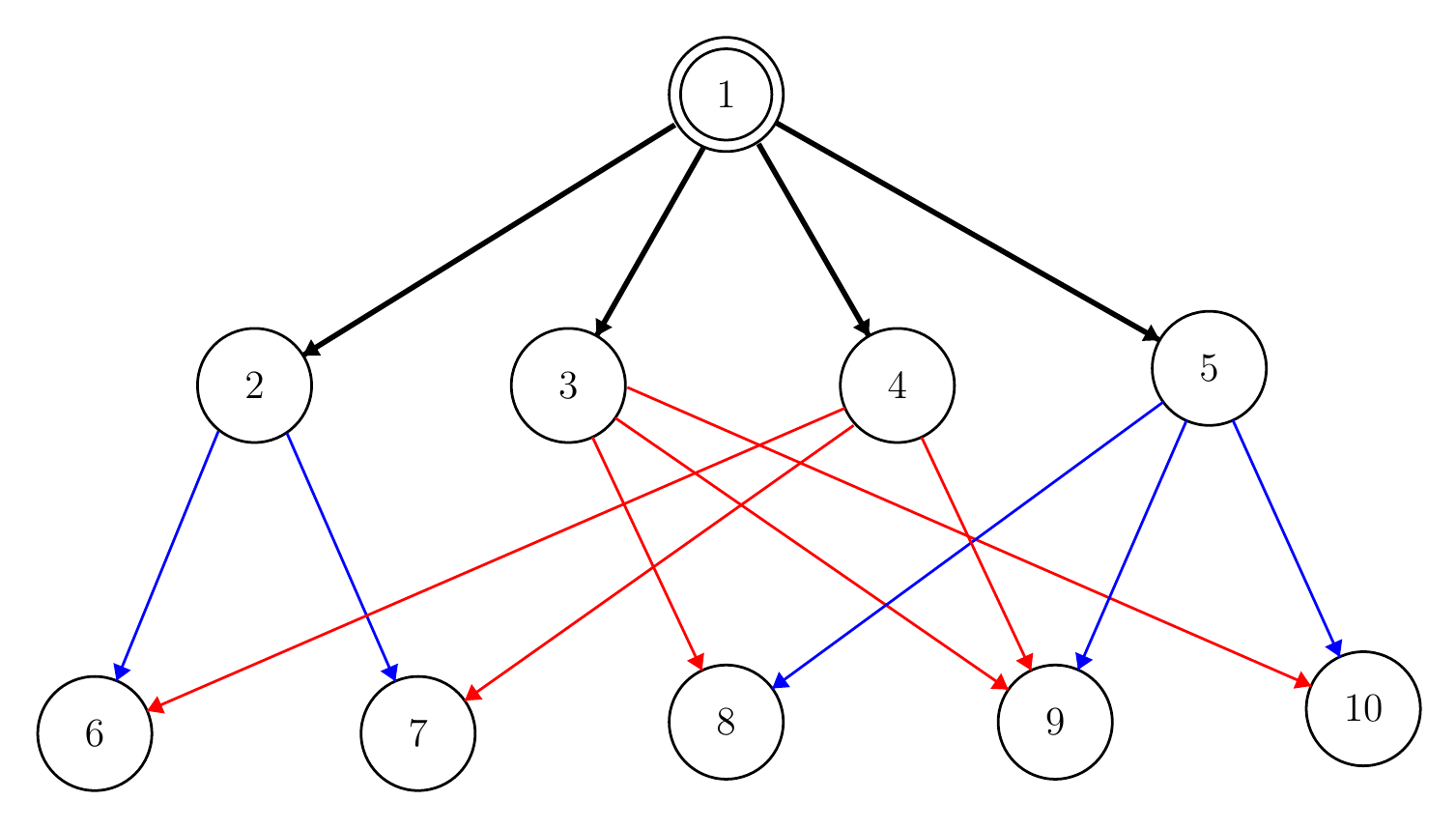}
\put(56,50){\scriptsize{Capacity=2}}
\put(89,32){\scriptsize{Capacity=1}}
\end{overpic}
%\vspace{5pt}
\caption{\small{A wireless network with non-interfering channels. The broadcast capacity of the network is $\lambda^*=2$.}}
\label{sim_net}
\end{figure}
\begin{figure}
\centering
\begin{overpic}[width=0.28\textwidth]{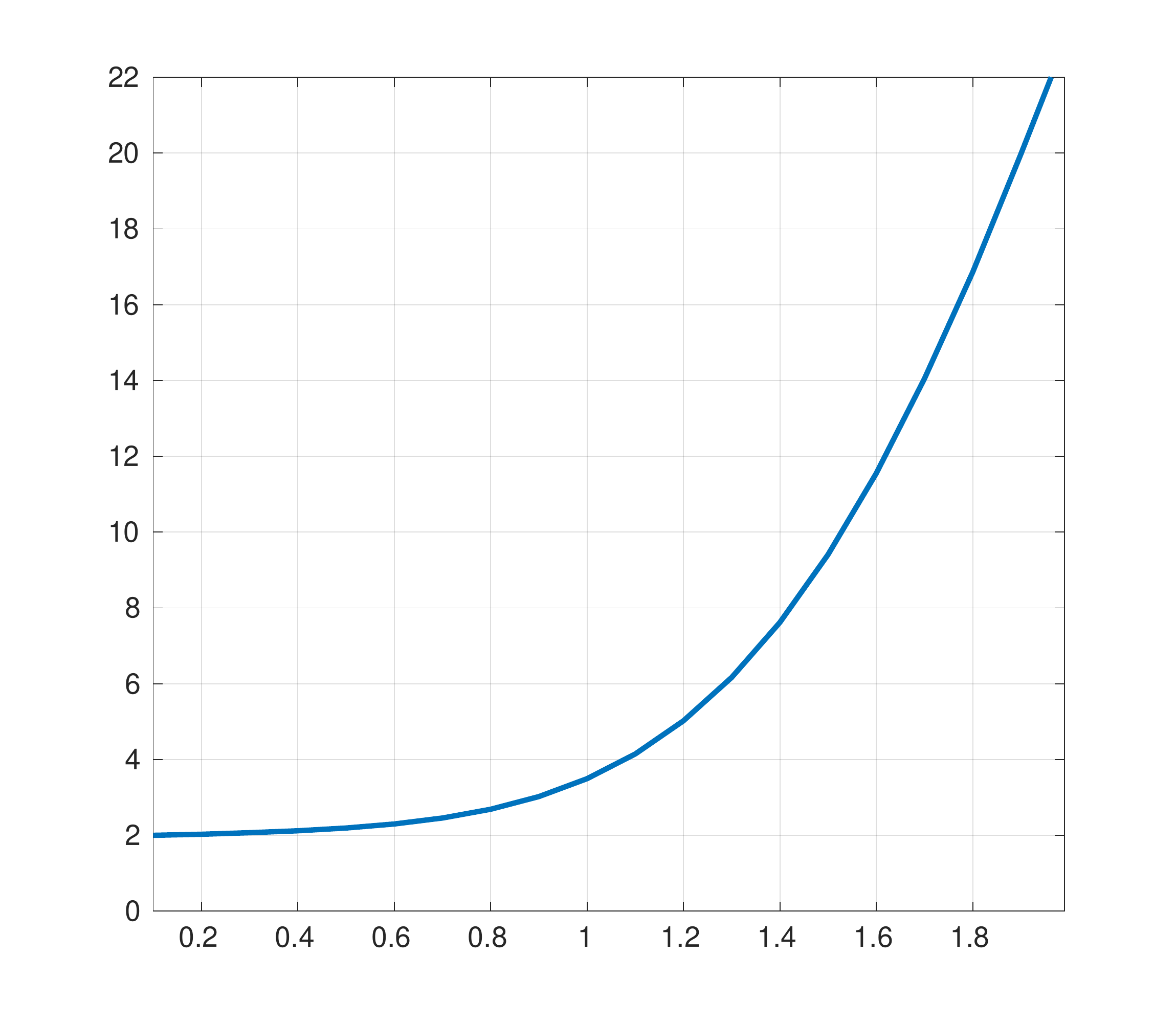}
\put(38,0){\small{Arrival rate $\lambda$}}
\put(2,24){\rotatebox{90}{\scriptsize{Broadcast Delay}}}
 \end{overpic}
%\vspace{5pt}
\caption{\small{Plot of the broadcast delay incurred by the UMW policy as a function of the arrival rate $\lambda$ in the network shown in Fig. \ref{sim_net}.}}
\label{rate_fig1}
\end{figure}
Figure \ref{rate_fig1} shows the average broadcast delay with the packet arrival rate $\lambda$  in this network under the action of the proposed \textsf{UMW} policy. Note that the minimum delay is at least $2$ as it takes at least two slots for any arriving packet to reach the nodes in the third layer. The plot confirms that the dynamic policy achieves the full broadcast capacity. 
\subsection{Network with Interference Constraints}
Consider the $3 \times 3$ wireless grid network, shown in Fig. \ref{grid_net}. Assume that the transmissions are limited by primary interference constraints, i.e, two nodes cannot transmit together if the transmissions interfere at any node in the network. Assume that any node, if activated, has a transmission rate of one packet per slot. In this setting we have the following upper-bound on the broadcast capacity of the network.

\begin{figure}
%\hspace*{-0.2cm}
\centering
\begin{overpic}[width=0.2\textwidth]{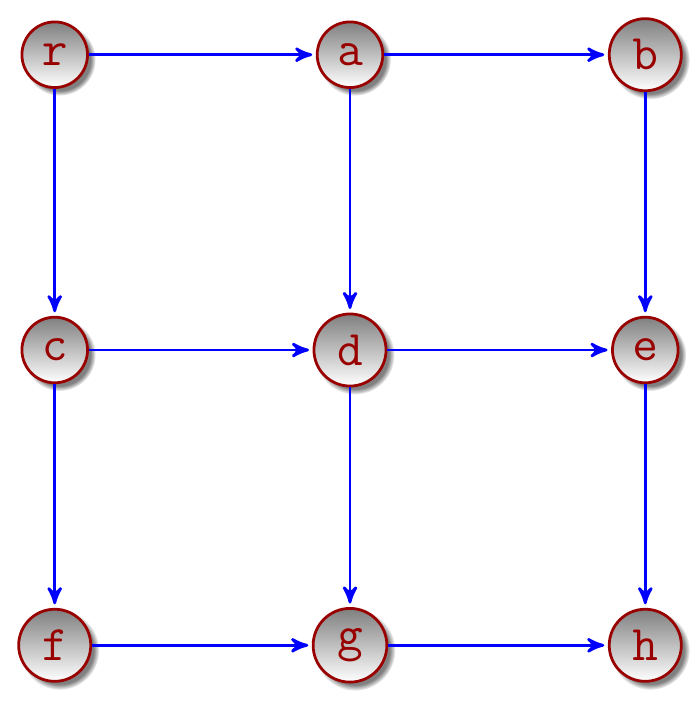}
\put(-8,100){\small{Source}}
 \end{overpic}
\vspace{5pt}
\caption{\small{A $3 \times 3$ wireless grid network with primary interference constraints. The wireless broadcast capacity ($\lambda^*$) of the network is at most $\frac{1}{3}$.}}
%\end{overpic}
\label{grid_net}
\end{figure}
\begin{framed}

\begin{lemma} \label{cap_bd1}
 The broadcast capacity of the $3\times 3$ grid network is at most $\frac{1}{3}$.
\end{lemma}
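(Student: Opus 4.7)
The plan is to apply a clique-based cut argument in the conflict graph, keyed on the ``cross'' subset $S := \{2, 4, 5, 6, 8\}$ consisting of the center and the four edge-midpoints of the grid. A direct check shows that $S$ is a clique in $\mathcal{C}(\mathcal{G})$: every pair of nodes in $S$ either is adjacent in $\mathcal{G}$ (the pairs involving $5$) or shares an out-neighbor---most importantly node $5$, which is a common out-neighbor of every pair in $\{2,4,6,8\}$. Hence no two nodes of $S$ may be scheduled simultaneously, and the long-run fractions of time $y_i$ that node $i\in S$ transmits satisfy the clique constraint $\sum_{i\in S} y_i \leq 1$.

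The main step is the combinatorial claim that every connected dominating set $D$ of $\mathcal{G}$ containing the source node $1$ satisfies $|D\cap S|\geq 3$. The intuition is that the complement $\bar S = \{1,3,7,9\}$ is an independent set in $\mathcal{G}$, so every node of $D\cap \bar S$ must be ``bridged'' to the rest of $D$ through elements of $S$. Concretely, I would argue: (i) since $N(1)=\{2,4\}\subseteq S$, connectivity of $D$ forces $D\cap\{2,4\}\neq \emptyset$; (ii) for each remaining corner $c\in\{3,7,9\}$, the domination requirement together with the lack of edges inside $\bar S$ forces the respective pairs $D\cap\{2,6\}\neq \emptyset$, $D\cap\{4,8\}\neq \emptyset$, and $D\cap\{6,8\}\neq \emptyset$; (iii) a short case analysis over the at most $\binom{5}{2}=10$ candidate sets $D\cap S$ of size two then shows that in every such case either some node of the grid is left undominated or the induced subgraph $\mathcal{G}[D]$ is disconnected, yielding $|D\cap S|\geq 3$.

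Combining the two ingredients, every broadcasted packet $p$ must be transmitted by a set of nodes that contains some CDS $D_p$ (so as to both dominate and propagate the packet), and hence contributes at least $|D_p\cap S|\geq 3$ transmissions into the clique $S$. Summing over all packets broadcasted within a window of length $T$ gives $\sum_{i\in S}\tau_i(T)\geq 3 R(T)$, where $\tau_i(T)$ counts the transmissions by node $i$ up to time $T$ and $R(T)$ is the number of broadcasted packets; the clique constraint gives the complementary bound $\sum_{i\in S}\tau_i(T)\leq T$. Dividing by $T$, letting $T\to\infty$, and taking the supremum over all admissible policies yields $\lambda^* \leq 1/3$. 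The main obstacle will be the combinatorial lemma in the second paragraph; I expect the cleanest organization to fix the ``source-bridge'' element from $\{2,4\}\cap D$ first and then to rule out each candidate second $S$-element via either an uncovered node or an explicit disconnection certificate.
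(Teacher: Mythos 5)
Your proof is correct, and it takes a genuinely different route from the paper's. The paper picks a \emph{triangle} in the conflict graph --- the source and two of its nearby nodes, which pairwise conflict under primary interference --- and argues that each of these three nodes is an unavoidable relay that must transmit at long-run rate at least $\lambda$; summing their activation fractions against the clique constraint gives $3\lambda\le 1$. You instead take the larger five-node ``cross'' clique $S$, prove the combinatorial lemma that every connected dominating set containing the source meets $S$ in at least three nodes (your case analysis is right: the corners form an independent set whose entire neighborhoods lie in $S$, and all ten two-element candidates for $D\cap S$ fail by leaving a corner undominated or disconnecting $\mathcal{G}[D]$), and then charge each broadcast packet at least three of the at most $T$ transmission opportunities available to the clique. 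This is the general ``clique versus CDS'' bound $\lambda^*\le 1/\min_{D\in\mathcal{D}}|D\cap K|$ for any clique $K$ of the conflict graph, and it extends mechanically to other topologies. It also buys some rigor that the paper's shortcut does not: the paper's inequalities $\lambda\le f_a$ and $\lambda\le f_c$ are justified by ``considering the broadcast rate at'' nodes that in fact have two in-neighbors each, so those steps really require the additional observation that the alternative in-neighbor cannot carry the full rate either; your counting argument sidesteps this entirely, at the price of the case analysis in the combinatorial lemma. Both arguments are tight, since the CDS $\{1,2,5,8\}$ achieves $|D\cap S|=3$ and the capacity is indeed $1/3$.
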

\end{framed}
The proof of the lemma is provided in Appendix \ref{cap_bd_proof}. \\
In Figure \ref{grid_bcast_fig} we show the broadcast delay as a function of the packet arrival rate, under the action of the \textsf{UMW} policy on the right most curve marked (a). From the plot, we observe that the delay-throughput curve has a vertical asymptote approximately along the straightline $\lambda=\frac{1}{3}$. This, together with lemma \ref{cap_bd1}, immediately implies that the broadcast capacity of the network is $\lambda^*=\frac{1}{3}$ and confirms the throughput-optimality of the \textsf{UMW} policy.  
% \begin{figure}
% \centering
% \begin{minipage}{0.5\textwidth}
%  \begin{overpic}[width=0.22\textwidth]{../figures/grid_network}
% \put(-8,100){\small{Source}}
%  \end{overpic}
% \vspace{5pt}
% \caption{\small{A $3 \times 3$ wireless grid network with primary interference constraints. The broadcast capacity of the network is $\lambda^*\leq \frac{1}{3}$.}}
% %\end{overpic}
% \label{grid_net}
% \end{minipage}
% \begin{minipage}{0.5\textwidth}
% \begin{overpic}[width=0.3\textwidth]{../figures/grid_delay}
% \put(38,0){\small{Arrival rate $\lambda$}}
% \put(2,24){\rotatebox{90}{\scriptsize{Broadcast Delay}}}
%  \end{overpic}
% %\vspace{5pt}
% \caption{\small{Plot of the broadcast delay incurred by the UMW policy as a function of the arrival rate $\lambda$ in the $3\times 3$ wireless grid network shown in Fig. \ref{grid_net}.}}
% \label{grid_bcast_fig}
% \end{minipage}
% \end{figure}

\begin{figure}
\centering
\begin{overpic}[width=0.29\textwidth]{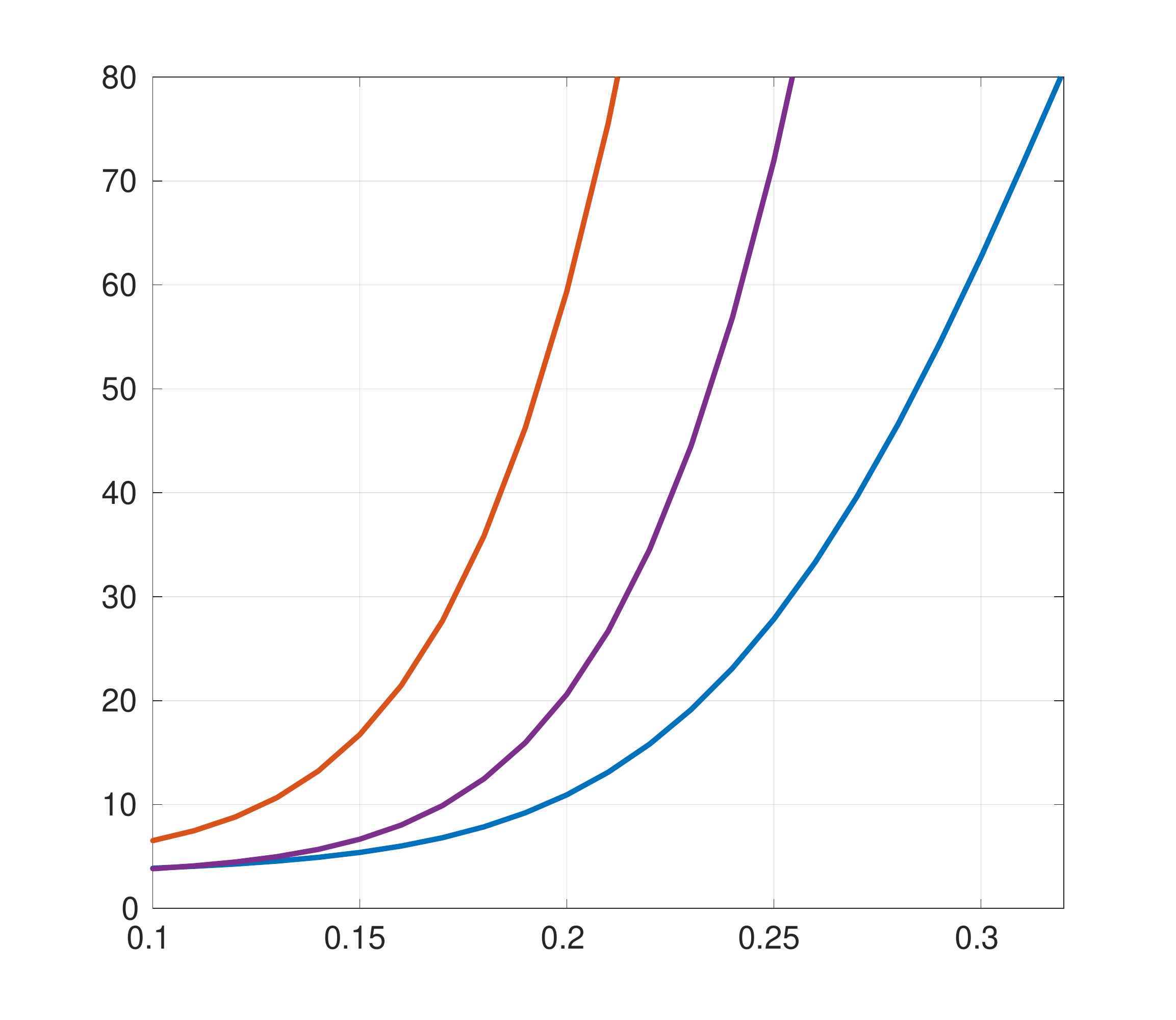}
\put(38,0){\small{Arrival rate $\lambda$}}
\put(70,35){\scriptsize{$p_{\textrm{ON}}=1$}}
\put(72,39){\scriptsize{(\textbf{a})}}
\put(62,56){\scriptsize{$p_{\textrm{ON}}$}}
\put(60,50){\scriptsize{$=0.6$}}
\put(63,60){\scriptsize{(\textbf{b})}}
\put(32,38){\scriptsize{(\textbf{c})}}
\put(22,35){\scriptsize{$p_{\textrm{ON}}$}}
\put(20,30){\scriptsize{$=0.4$}}
\put(2,24){\rotatebox{90}{\scriptsize{Broadcast Delay}}}
 \end{overpic}
%\vspace{5pt}
\caption{\small{Plot of the broadcast delay incurred by the UMW policy as a function of the arrival rate $\lambda$ in the $3\times 3$ wireless grid network shown in Fig. \ref{grid_net}.}}
\label{grid_bcast_fig}
\end{figure}
\subsubsection*{Broadcasting in a Time Varying Network}
Next, we simulate the \textsf{UMW} broadcast policy on a time-varying wireless grid network of Figure \ref{grid_net}, in which the nodes are not always available for transmission (\emph{e.g.,} they are sensors in sleep mode). In particular, we assume a simplified model where each node is active for potential transmission at a slot independently with some fixed but unknown probability $p_{\textrm{ON}}$.  The delay performance of the proposed \textsf{UMW} broadcast policy is shown in Figure \ref{grid_bcast_fig} (b) and (c) for two cases, $p_{\textrm{ON}}=0.6$ and $p_{\textrm{ON}}=0.4$ respectively. Following similar analysis as in the preceding sections, it can be shown that the \textsf{UMW} policy is also throughput-optimal for time-varying networks. Hence, from the plot it follows that the broadcast capacities of the time-varying $3\times 3$ wireless grid network are $\approx 0.26$ and $\approx 0.22$ packets per slot, for the activity parameter $p_{\textrm{ON}}=0.6$ and $p_{\textrm{ON}}=0.4$ respectively. 
%We consider the case when $p_i=0.8, \forall i$. The performance of the \textsf{UMW} policy is shown in the plot.   

\section{Conclusion} \label{conclusion}
In this paper we obtained the first throughput-optimal broadcast policy for wireless networks with point-to-multi-point links and arbitrary scheduling constraints. The policy is derived using the powerful framework of \emph{precedence-relaxed virtual network}, which we used earlier for designing throughput-optimal policies for networks with point-to-point links. Packet routing and scheduling decisions are made by solving standard optimization problems on the network, weighted by the virtual queue lengths. The policy is proved to be throughput optimal by a combination of Lyapunov method and a sample path argument using adversarial queueing theory. Extensive simulation results demonstrate the efficiency of the proposed policy in both static and dynamic network settings. There exist several interesting directions to extend this work. First, in our simplified model, we assumed that interference-free wireless transmissions are also error-free. A more accurate wireless channel model would incorporate the possibility of packet losses associated with each individual receiving nodes, due to fading and receiver noise \cite{sinha2016throughput}. Second, it remains unknown whether the \textsf{UMW} policy is still throughput optimal if the routing and node activations are made using the corresponding \emph{physical queue} lengths as compared to the virtual queues. A positive result in this direction would lead to a more efficient implementation. 
\bibliographystyle{ACM-Reference-Format}
\bibliography{MIT_broadcast_bibliography}
%\newpage 
\section{Appendix} \label{appendix}
\subsection{Proof of Hardness of the \large{\textsc{Wireless Broadcast}} Problem} \label{hardness_proof}

We start with the following lemma 
\begin{lemma}
 \textsf{Wireless Broadcast} is in \textsf{NP}.
\end{lemma}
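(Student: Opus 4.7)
The plan is to exhibit a polynomial-size certificate together with a deterministic polynomial-time verifier. A natural certificate is the full schedule itself: for each time slot $t \in \{1, 2, \ldots, T\}$ and each node $i \in V$, list the (at most $C$) packets from $\mathcal{M}$ that node $i$ transmits in slot $t$, along with the set of nodes activated in that slot. Since there are $n$ nodes, at most $T$ slots, each node transmits at most $C$ packets per slot, and each packet identifier has size $O(\log M)$, the certificate has size $O(nTC \log M)$, which is polynomial in the input size (with $T$ and $M$ measured as part of the input).

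Given such a certificate, the verifier performs the following polynomial-time checks. First, for each slot $t$, it confirms that the set of activated nodes forms an independent set in the conflict graph $\mathcal{C}(\mathcal{G})$, by scanning all pairs of activated nodes and checking for edges in $\mathcal{C}(\mathcal{G})$; this takes $O(n^2)$ time per slot. Second, it confirms that every transmitting node in slot $t$ belongs to the activated set and transmits at most $C_i$ packets. Third, it verifies causality: for each transmission of a packet $p$ by a non-source node $i$ at slot $t$, it checks that $p$ had already been received by $i$ in some earlier slot $t' < t$; this can be done by maintaining, in a single forward pass over the schedule, the set $N_i(t)$ of packets held by each node $i$ at the end of slot $t$, updated via the point-to-multipoint rule $N_j(t) \leftarrow N_j(t) \cup \{\text{packets transmitted by any in-neighbor of } j \text{ in slot } t\}$. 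Fourth, after the pass, it verifies that $N_i(T) = \mathcal{M}$ for every $i \in V$, i.e., $B^\pi(T) = \mathcal{M}$. Each of these checks is clearly polynomial in $n$, $|E|$, $M$, and $T$.

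The only conceptual subtlety is the encoding convention for $T$: if $T$ were allowed to be given in binary and be super-polynomial in the remaining input, the certificate would not fit in polynomial space. However, the theorem is about the decision problem with $T$ as part of the input instance in the standard sense, and in the reduction used to prove hardness the relevant case is $T = 2$, so polynomial dependence on $T$ is unproblematic. The main (very mild) obstacle is therefore just being careful to describe the state update $N_i(t)$ in a way that makes it evident that both the certificate and the verification are polynomial; the rest is bookkeeping.
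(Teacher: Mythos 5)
Your proof is correct and takes the same approach as the paper, which simply notes that a scheduling algorithm (i.e., the schedule itself) serves as a polynomial-size certificate; you have filled in the verification details and the size accounting that the paper leaves implicit. The remark about the encoding of $T$ is a reasonable point of care but does not change the argument.
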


\begin{proof}
From the formulation, the problem \textsf{Wireless Broadcast} is a Decision Problem. Also, if it has a \textsf{Yes} answer then there is a scheduling algorithm which serves as a certificate. Hence the problem belongs to \textsf{NP}. 
\end{proof}

Next we show that an \textbf{NP}-complete problem, named \textsc{Monotone Not All Equal $3$-SAT (MNAE-$3$SAT)} reduces to the problem \textsc{Wireless Broadcast} in polynomial time. This will complete the reduction. \\
We begin with the description of the problem MNAE-$3$SAT:
%In the following we describe the problem \textsc{MNAE-3SAT}.  
\subsubsection{The problem \textsc{MNAE-3SAT} }
\begin{framed}
\begin{itemize}
\item \textbf{INSTANCE:} Set $U$ of boolean variables, collection $\mathcal{C}$ of clauses over $U$ such that each clause $c\in \mathcal{C}$ has $|c|=3$ variables and none of the clauses contain complemented variables (\emph{Monotonicity}). 
\item \textbf{QUESTION:} Is there a truth assignment for $U$ such that each clause in $\mathcal{C}$ has at least one true literal and at least one false literal ? 
\end{itemize}	
\end{framed}
It is known that the problem $\textsc{MNAE-3SAT}$ is \textbf{NP}-complete \cite{schaefer1978complexity}. 

\subsubsection{Reduction: \textsc{MNAE-3SAT} $\stackrel{\mathsf{poly}}{\implies} $\textsc{Wireless Broadcast}}
Suppose we are given an instance of the problem \textsc{MNAE-3SAT} $(U,\mathcal{C})$. Let $|U|=n$ and $|\mathcal{C}|=m$. Denote $n$ boolean  variables by $\{x_i, i=1,2, \ldots, n\}$. For this instance of \textsc{MNAE-3SAT}, we consider the following instance $\mathcal{G}(V,E)$ of \textsc{Wireless Broadcast} as shown in Figure \ref{reduction_figure}. The construction is done as follows:

\begin{figure}
\centering
	\begin{overpic}[width=0.45\textwidth]{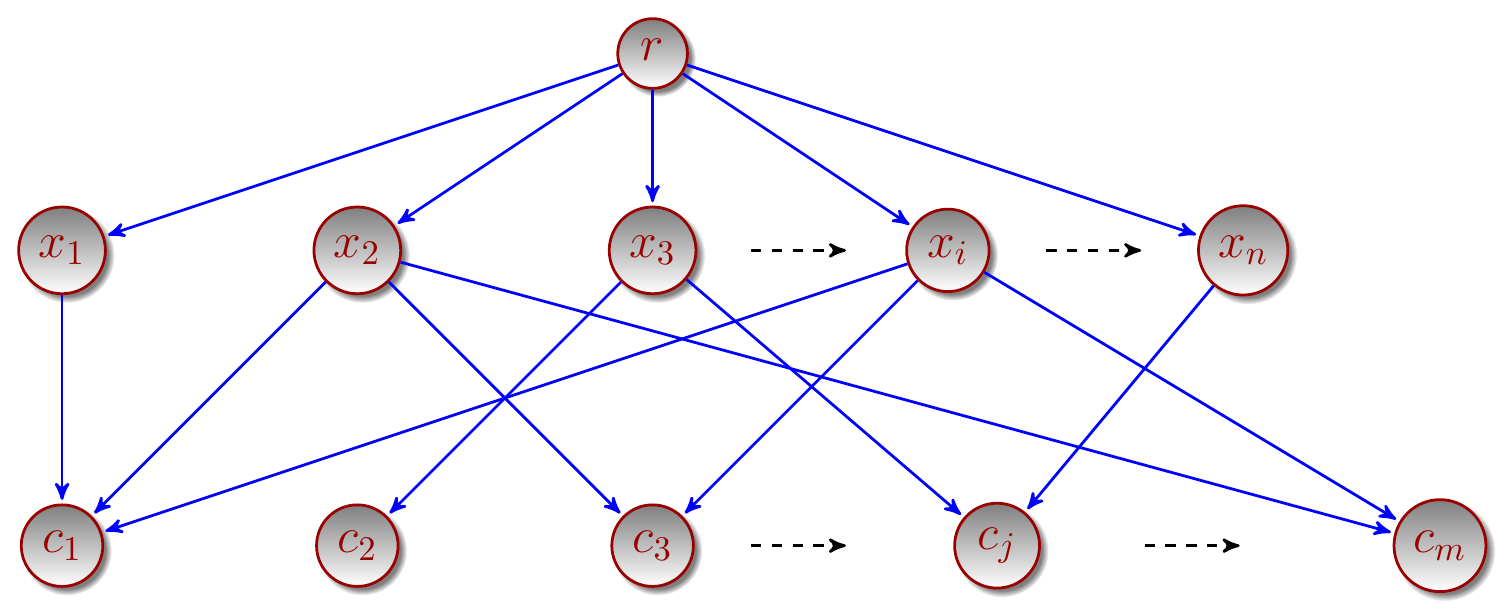}	
	\put(46,39){$\small{C_{\texttt{r}}=2}$}
	%\put(30,34){$2$}
	%\put(56,34){$2$}
	%\put(50,29){$2$}
	%\put(35,29){$2$}
	%\put(44,32){$2$}
	\put(60,28){$C_{x_i}=1$}
	\put(5,28){$C_{x_1}=1$}
	%\put(44,18){$1$}
	%\put(24,18){$1$}
	%\put(64,18){$1$}
	%\put(6,18){$1$}
	%\put(84,18){$1$}
	\end{overpic}
\caption{The Gadget used for the hardness proof}
\label{reduction_figure}
\end{figure}
\begin{itemize}
	\item There are a total of $n+m+1$ nodes. The nodes are divided into three layers as shown in Fig. \ref{reduction_figure}.\\
	\item Let $\texttt{r} \in V$ be the source node in the first layer. The capacity of the source node is $2$.  This means that, the source node can transmit $2$ packets per slot to its out-neighbours. 
	\item There are $n$ nodes in the second layer of the figure \ref{reduction_figure}, all of which are out-neighbors of the source node $\texttt{r}$. Each of these nodes correspond to a variable $x_i$ in \textsc{MNAE-3SAT} instance. Capacity of each of these nodes in the second layer is one. 
	\item There are $m$ nodes in the third layer, each corresponding to a clause $c\in \mathcal{C}$ in the \textsc{MNAE-3SAT} instance. The edges incoming to a node $c_j$ are defined as follows: if the clause $c_j$ is expressed as $c_j=x_{i_1} \vee x_{i_2} \vee x_{i_3} $, then we add three edges $(x_{i_1},c_j), (x_{i_2},c_j), (x_{i_3},c_j)$ in the graph $\mathcal{G}(V,E)$. Capacities of each node in the third layer is taken to be $1$. 
\end{itemize}
Now consider the following instance of \textsf{Wireless Broadcast} on the constructed graph $\mathcal{G}(V,E)$. There are $M=2$ packets at the source with a deadline of $T=2$ slots. 
 We claim that the following two questions are equivalent, i.e. Question $1$ has a YES answer iff the Question $2$ has an YES answer. 
 \begin{itemize}
 	\item \textbf{Question 1:} Is the \textsc{MNAE-3SAT} instance $(U, \mathcal{C})$ satisfiable ?
 \item \textbf{Question 2:} Does the constructed \textsf{Wireless Broadcast} instance have a \textsf{Yes} Solution?
 \end{itemize}
 To show this, let us denote the packets sent by the source $\texttt{r}$ at the beginning of the slot by $\{0,1\}$. Since the capacity of the source $\texttt{r}$ is $2$, all nodes $x_1,x_2, \ldots, x_n$ receive this packet at every slot. Since the capacities of the nodes $x_i$ is only unity, they can only transmit either packet $0$ or the packet $1$ at that slot. We can denote this choice by the binary variable $x_i$, i.e. $x_i=0$ if the node $x_i$ sends packet $0$ and is $1$ if it sends packet $1$.\\
 Note that the node $c_j$ will receive both the packets if the corresponding clause contains at least one $0$ and at least one $1$. For a broadcast capacity of $2$, all nodes must receive both packets at every slot. This is exactly the condition for the existence of a satisfying assignment of the \textsc{MNAE-3SAT} instance. This proves the intended hardness result. $\blacksquare$
\subsubsection*{\textbf{Corollary:}}
As a direct consequence of the above reduction, it follows that the problem \textsf{Wireless Broadcast} remains \textbf{NP}-complete even with the following additional restrictions:
\begin{enumerate}
\item The wireless transmissions are non-interfering.
%, \emph{e.g.}, when each node transmits over a different channel (c.f. \cite{edmonds}). 
\item The graph $\mathcal{G}(V,E)$ is a DAG (c.f. \cite{sinha2015throughput}, \cite{sinha2016throughput2}).
\item The node capacities may take at most two values. 
%\item There are at most two distinct transmission capacities among all nodes.
\item The in-degree of each node is at most $3$.
%\item The number of messages $M$ and the deadline $T$ are bounded by universal constants.
\end{enumerate}

%\newpage

\subsection{Proof of Stability of the Virtual Queues} \label{stability_proof_VQ}
% Consider the Lyapunov function $L(\bm{\tilde{Q}}(t))$, which is defined as the $\ell_2$ norm of the virtual queues, i.e., 
The proof of positive-recurrence and the sample path result is divided into several parts. First we describe and develop some general tools and then apply these tools to the virtual-queue Markov Chain $\{\bm{\tilde{Q}}(t)\}_{t \geq 1}$.  \\
\subsubsection{Mathematical Tools}
The key to our proof is a stronger version of the Foster-Lyapunov drift theorem, obtained by Hajek \cite{hajek1982hitting} in a more general context. The following statement of the result, quoted from \cite{eryilmaz2012asymptotically}, will suffice our purpose. First, we recall the definition of a Lyapunov function:

\begin{definition}[Lyapunov Function]
 Let $\mathcal{X}$ denote the state space of any process. We call a function $L : \mathcal{X} \to \mathbb{R}$ a \textbf{Lyapunov} function if the following conditions hold:\\
 \begin{itemize}
  \item (1)  $ L(x) \geq  0 , \forall x \in \mathcal{X}$ and,
  \item (2) the set $S(M)= \{x \in \mathcal{X} : L(x) \leq M\}$ is finite for all finite $M$.
  \end{itemize}
  \end{definition}
  
%\begin{framed}
\begin{theorem}[Hajek '82] \label{hajek}
 For an irreducible and aperiodic Markov Chain $\{X(t)\}_{t\geq 0}$ over a countable state space $\mathcal{X}$, suppose $L:\mathcal{X} \to \mathbb{R}_+$ is a Lyapunov function. Define the drift of $L$ at $X$ as 
 \begin{equation*}
  \Delta L(X) \stackrel{\Delta}{=} \big(L(X(t+1))-L(X(t))\big) \mathcal{I}(X(t)=X),
 \end{equation*}
where $\mathcal{I}(\cdot)$ is the indicator function. Thus, $\Delta Z(X)$ is a random variable that measures the amount of change in the value of $Z$ in one step, starting from the state $X$. Assume that the drift satisfies the following two conditions:\\
\begin{itemize}
 \item \textbf{(C1)} There exists an $\epsilon >0$ and a $B<\infty$ such that 
 \begin{eqnarray*}
  \mathbb{E}(\Delta L(X)|X(t)=X) \leq -\epsilon, \hspace{5pt}\forall X \in \mathcal{X} \hspace{5pt} \textrm{with} \hspace{3pt} Z(X) \geq B
 \end{eqnarray*}
 \item \textbf{(C2)} There exists a $D < \infty$ such that 
 \begin{eqnarray*}
  |\Delta L(X)| \leq D, \hspace{5pt} \textrm{w.p.}\hspace{2pt} 1, \hspace{5pt} \forall X \in \mathcal{X} 
 \end{eqnarray*}
Then, the Markov Chain $\{X(t)\}_{t \geq 0}$ is positive recurrent. Furthermore, there exists scalars $\theta^* >0$ and a $C^* < \infty$ such that 
\begin{eqnarray*}
 \limsup_{t\to \infty} \mathbb{E}\big(\exp(\theta^* L(X(t))\big) \leq C^* 
\end{eqnarray*}

\end{itemize}

\end{theorem}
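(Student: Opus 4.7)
The plan is to establish the two conclusions separately. Positive recurrence follows from (C1) alone via the Foster--Lyapunov drift criterion, while the exponential-moment bound genuinely requires both (C1) and (C2) and constitutes the substantive part of the argument.

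For positive recurrence, I take $F = \{X \in \mathcal{X} : L(X) < B\}$, which is a finite set because $L$ is a Lyapunov function. Condition (C1) asserts that $\mathbb{E}[L(X(t+1)) - L(X(t)) \mid X(t) = X] \leq -\epsilon$ whenever $X \notin F$, and $L \geq 0$ supplies the required lower bound. Foster's theorem then immediately delivers positive recurrence of the irreducible, aperiodic chain. Concretely, the process $M_t := L(X(t)) + \epsilon t$, stopped at the hitting time $\tau_F$ of $F$, is a non-negative supermartingale; the optional-stopping bound $\epsilon\, \mathbb{E}_x[\tau_F] \leq L(x)$ for every starting state $x$ shows that returns to $F$ are integrable, which together with the finiteness of $F$ gives positive recurrence.

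For the exponential-moment bound, the core step is to establish the one-step geometric recursion
\begin{equation*}
\mathbb{E}\bigl[e^{\theta L(X(t+1))} \mid X(t) = X\bigr] \leq \rho\, e^{\theta L(X)} + K \cdot \mathbf{1}(X \in F)
\end{equation*}
for some fixed $\theta = \theta^* > 0$, $\rho \in (0,1)$, and $K < \infty$. I treat the two regions separately. For $X \notin F$, I write $L(X(t+1)) = L(X) + \Delta L(X)$ with $|\Delta L(X)| \leq D$ by (C2) and invoke the elementary inequality $e^{u} \leq 1 + u + u^2$ valid for $|u| \leq 1$, which applies provided $\theta \leq 1/D$. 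Taking conditional expectations and invoking (C1) yields
\begin{equation*}
\mathbb{E}\bigl[e^{\theta \Delta L(X)} \mid X(t) = X\bigr] \leq 1 - \theta \epsilon + \theta^2 D^2.
\end{equation*}
Selecting $\theta^* := \min\{\epsilon/(2D^2),\, 1/D\}$ forces this quantity to be at most $1 - \theta^*\epsilon/2 =: \rho < 1$, and multiplying through by $e^{\theta^* L(X)}$ delivers the contraction in that regime. For $X \in F$, condition (C2) forces $L(X(t+1)) \leq B + D$ almost surely, so the conditional expectation is capped by $K := e^{\theta^*(B+D)}$. Iterating the recursion over $t$ produces
\begin{equation*}
\mathbb{E}\bigl[e^{\theta^* L(X(t))}\bigr] \leq \rho^t\, \mathbb{E}\bigl[e^{\theta^* L(X(0))}\bigr] + \frac{K}{1 - \rho},
\end{equation*}
from which $\limsup_{t \to \infty} \mathbb{E}[e^{\theta^* L(X(t))}] \leq K/(1 - \rho) =: C^*$, as claimed.

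The main obstacle is the joint calibration of $\theta^*$: it must simultaneously be small enough for the second-order Taylor bound on $e^{\theta \Delta L}$ to apply (forcing $\theta^* D \leq 1$) and small enough that the favorable linear term $-\theta\epsilon$ dominates the unfavorable quadratic remainder $\theta^2 D^2$, yet positive so that the exponent is non-trivial; the explicit choice above threads this needle. A secondary technicality is a possibly heavy-tailed initial distribution: since the conclusion is a $\limsup$ rather than a uniform-in-$t$ bound, one first conditions on $X(0) = x$ (where $\mathbb{E}[e^{\theta^* L(X(1))} \mid X(0)=x] \leq e^{\theta^*(L(x)+D)}$ by (C2) alone is finite for each $x$) and then iterates from $t = 1$, after which the geometric factor $\rho^t$ washes out any dependence on the initial value and leaves the universal constant $C^*$.
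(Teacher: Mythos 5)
The paper does not prove this theorem at all: it is imported verbatim as Hajek's 1982 drift result (with the statement quoted from Eryilmaz--Srikant), so there is no in-paper argument to compare yours against. Your self-contained proof is correct and is essentially the standard derivation of this lemma: Foster's criterion for positive recurrence, and a one-step geometric contraction for the exponential moment obtained from the second-order bound $e^{u}\leq 1+u+u^{2}$ on $|u|\leq 1$, with $\theta^{*}=\min\{\epsilon/(2D^{2}),1/D\}$ calibrated so that the linear drift term dominates the quadratic remainder. All the estimates check out, including the cap $K=e^{\theta^{*}(B+D)}$ on the bounded set and the handling of the initial condition. The only imprecision worth flagging is the claim that positive recurrence follows from (C1) ``alone'': Foster's theorem also needs $\mathbb{E}\bigl[L(X(t+1))\mid X(t)=X\bigr]<\infty$ for $X$ in the finite set $F$, which here is supplied by (C2); since (C2) is part of the hypotheses this costs nothing, but the attribution to (C1) by itself is slightly too strong.
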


We now establish the following technical lemma, which will be useful later.
\begin{framed}
\begin{lemma} \label{tech_lemma}
 Let $\{Y(t)\}_{t \geq 0}$ be a stochastic process taking values on the nonnegative real line. Supppose that there exists scalars $\theta^* >0$ and $C^* < \infty$ such that, 
 \begin{eqnarray} \label{mmt_cnd}
  \limsup_{t \to \infty} \mathbb{E}(\exp(\theta^*Y(t))) \leq C^*
 \end{eqnarray}
Then, 
\begin{eqnarray*}
  Y(t) = \mathcal{O}(\log(t)), \hspace{10pt} \textrm{w.p.1}   
\end{eqnarray*}

\end{lemma}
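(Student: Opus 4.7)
The plan is to combine a Chernoff-style tail bound (derived from the exponential moment hypothesis) with the first Borel--Cantelli lemma to get a pathwise logarithmic bound.

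First, I would convert the exponential moment hypothesis into a polynomial tail decay. Fix any constant $\alpha > 1/\theta^*$. By Markov's inequality applied to the nonnegative random variable $\exp(\theta^* Y(t))$,
\begin{equation*}
\mathbb{P}\bigl(Y(t) > \alpha \log t\bigr) \;=\; \mathbb{P}\bigl(\exp(\theta^* Y(t)) > t^{\alpha \theta^*}\bigr) \;\leq\; \frac{\mathbb{E}\bigl[\exp(\theta^* Y(t))\bigr]}{t^{\alpha \theta^*}}.
\end{equation*}
By hypothesis \eqref{mmt_cnd}, there exists $t_0$ such that $\mathbb{E}[\exp(\theta^* Y(t))] \leq 2C^*$ for every $t \geq t_0$. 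Hence for all $t \geq t_0$,
\begin{equation*}
\mathbb{P}\bigl(Y(t) > \alpha \log t\bigr) \;\leq\; \frac{2C^*}{t^{\alpha \theta^*}}.
\end{equation*}

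Second, I would sum these tail probabilities. Because $\alpha \theta^* > 1$, the series $\sum_{t \geq t_0} t^{-\alpha \theta^*}$ converges, so $\sum_{t \geq 1} \mathbb{P}(Y(t) > \alpha \log t) < \infty$. The first Borel--Cantelli lemma then yields
\begin{equation*}
\mathbb{P}\bigl(Y(t) > \alpha \log t \text{ infinitely often}\bigr) \;=\; 0,
\end{equation*}
which means that, almost surely, there is a (random) finite time $T(\omega)$ beyond which $Y(t) \leq \alpha \log t$ for all $t \geq T(\omega)$. Choosing the constant $c$ slightly larger than $\alpha$ if needed to absorb the finitely many exceptional early values, we obtain $Y(t) = \mathcal{O}(\log t)$ w.p.\ 1, as required.

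I do not expect any serious obstacle here: both Markov's inequality and Borel--Cantelli apply directly once the exponent $\alpha$ is chosen to make the resulting series summable. The only subtlety worth stating carefully is that the $\mathcal{O}(\cdot)$ constant and the threshold time are both sample-path dependent, which is exactly what the definition in the footnote on p.\ \pageref{log_bdd} allows. Note in particular that no independence assumption on $\{Y(t)\}$ is needed, since only the easy direction of Borel--Cantelli is used.
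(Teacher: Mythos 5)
Your proposal is correct and follows essentially the same route as the paper's proof: Markov's inequality applied to $\exp(\theta^* Y(t))$ to get a summable polynomial tail (the paper fixes $\eta^*=2/\theta^*$ so the exponent is exactly $2$, where you keep a generic $\alpha>1/\theta^*$), followed by the first Borel--Cantelli lemma. No gaps.
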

\end{framed}

\begin{proof}
Define the positive constant $\eta^* = \frac{2}{\theta^*}$. We will show that 
\begin{eqnarray*}
 \mathbb{P}( Y(t) \geq \eta^* \log(t), \hspace{10pt} \textrm{i.o.}\footnote{\emph{i.o}=\emph{infinitely often}})=0.  
\end{eqnarray*}
For this, define the event $E_t$ as 
\begin{eqnarray} \label{over_shoot_event}
 E_t = \{ Y(t) \geq \eta^* \log(t) \}
\end{eqnarray}
From the given condition \eqref{mmt_cnd}, we know that there exists a finite time $t^*$ such that 
\begin{eqnarray} \label{bdd_moment}
 \mathbb{E} (\exp(\theta^*Y(t))) \leq C^*+1, \hspace{5pt} \forall t \geq t^* 
\end{eqnarray}
We can now upper-bound the probabilities of the events $E_t, t \geq t^*$ as follows
\begin{eqnarray*}
 \mathbb{P}(E_t)&=&\mathbb{P}( Y(t) \geq \eta^* \log(t) ) \\
 &=& \mathbb{P}\big( \exp(\theta^*Y(t)) \geq \exp(\theta^* \eta^* \log(t)) \big)\\
 &\stackrel{(a)}{\leq}& \frac{\mathbb{E}(\exp(\theta^*Y(t)))}{t^2} \\
 &\stackrel{(b)}{\leq}& \frac{C^*+1}{t^2}
\end{eqnarray*}
The inequality (a) follows from the Markov inequality and the fact that $\theta^*\eta^*=2$. The inequality (b) follows from Eqn. \eqref{bdd_moment}. Thus, we have 
\begin{eqnarray*}
 \sum_{t=1}^{\infty}\mathbb{P}(E_t) &=& \sum_{t=1}^{t^*-1} \mathbb{P}(E_t) + \sum_{t=t^*}^{\infty}\mathbb{P}(E_t)\\
 &\leq & t^* + (C^*+1)\sum_{t=t^*}^{\infty} \frac{1}{t^2}\\
 & \leq& t^* + (C^*+1) \frac{\pi^2}{6} < \infty 
\end{eqnarray*}
Finally, using the Borel Cantelli Lemma, we conclude that 
\begin{eqnarray*}
 \mathbb{P}(\limsup Y_t \geq \eta^* \log t) = \mathbb{P}(E_t \hspace{5pt} \textrm{i.o.}) = 0 
\end{eqnarray*}
This proves that $ Y_t= \mathcal{O}(\log t), \textrm{w.p.} 1$. 
\end{proof}
Combining Theorem \ref{hajek} with Lemma \ref{tech_lemma}, we have the following corollary 
\begin{framed}
\begin{corollary} \label{corollary}
 Under the conditions (C1) and (C2) of Theorem \ref{hajek}, we have 
 \begin{eqnarray*}
  L(X(t)) = \mathcal{O}(\log t) , \hspace{10pt} \textrm{w.p.} \hspace{3pt} 1
 \end{eqnarray*}

\end{corollary}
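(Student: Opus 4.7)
The plan is to observe that the corollary is essentially a direct chaining of the two tools just developed: Theorem \ref{hajek} provides an exponential moment bound on $L(X(t))$, and Lemma \ref{tech_lemma} converts any such bound into an almost sure $\mathcal{O}(\log t)$ growth rate. So the proof reduces to instantiating the lemma with $Y(t) := L(X(t))$ and verifying its hypotheses.

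First I would note that, since $L$ is a Lyapunov function by assumption, $L(X(t)) \geq 0$ for every $t$ and every sample path, so $Y(t) := L(X(t))$ is indeed a nonnegative real-valued stochastic process, matching the setup of Lemma \ref{tech_lemma}. Next I would invoke Theorem \ref{hajek} under the standing hypotheses (C1) and (C2): the theorem yields positive recurrence of $\{X(t)\}$ together with scalars $\theta^* > 0$ and $C^* < \infty$ for which
\begin{equation*}
\limsup_{t \to \infty} \mathbb{E}\bigl[\exp\bigl(\theta^* L(X(t))\bigr)\bigr] \leq C^*.
\end{equation*}
This is precisely the exponential moment condition \eqref{mmt_cnd} of Lemma \ref{tech_lemma} with $Y(t) = L(X(t))$ and the same constants $\theta^*, C^*$. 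Applying the lemma then gives $L(X(t)) = \mathcal{O}(\log t)$ almost surely, which is the claim.

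There is essentially no obstacle here beyond checking that the constants produced by Theorem \ref{hajek} are compatible with the ones demanded by Lemma \ref{tech_lemma}; since the lemma is stated generically for \emph{any} pair $(\theta^*, C^*)$ satisfying \eqref{mmt_cnd}, no further work is required. In particular, the Borel--Cantelli argument that turns the moment bound into an almost sure logarithmic bound has already been absorbed into the proof of Lemma \ref{tech_lemma}, so it does not need to be reproduced in the corollary's proof.
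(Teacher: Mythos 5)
Your proposal is correct and matches the paper's reasoning exactly: the paper obtains the corollary by "combining Theorem \ref{hajek} with Lemma \ref{tech_lemma}," i.e., the exponential moment bound from the theorem instantiates the hypothesis of the lemma with $Y(t) = L(X(t))$. No gaps.
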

\end{framed}
\subsubsection{Construction of a Stationary Randomized Policy for the Virtual Queues $\{\bm{\tilde{Q}}(t)\}_{t\geq 1}$} \label{stationary_policy}
Let $\mathcal{D}$ denote the set of all Connected Dominating Sets (CDS) in the graph $\mathcal{G}$ containing the source $\texttt{r}$. Since the broadcast rate $\lambda < \lambda^*$ is achievable by a stationary randomized policy, there exists such a policy $\pi^{*}$ which  executes the following:
\begin{itemize}
\item  There exist non-negative scalars $\{a_i^*, i =1,2,\ldots, |\mathcal{D}|\}$ with $\sum_i a_i^*=\lambda$, such that each new incoming packet is routed independently along a CDS $D_i \in \mathcal{D}$ with probability $\frac{a_i^*}{\lambda}, \forall i$. The packet routed along the CDS $D_i$ corresponds to an arrival to the virtual queues $\{Q_j, j \in D_i\}$. \\
As a result, packets arrive to the virtual queue $Q_j$ i.i.d. at an expected rate of $\sum_{i:j \in D_i} a^*_j, \forall j$ per slot.
%Thus, the expected number of arrivals contributed by the CDS $D_i$ to the virtual queues corresponding to its vertices is $a_i$ packets per slot. 
\item  A feasible schedule $\bm{s}_j \in \mathcal{M}$ is selected for transmission with probability $p_j$ $j=1,2,\ldots, k$ i.i.d. at every slot. By Caratheodory's theorem, the value of $k$ can be restricted to at most $n+1$. This results in the following expected service rate vector from the virtual queues:  
\begin{eqnarray*}
	\bm{\mu}^*= \sum_{j=1}^{n+1}p_j c_j\bm{s}_j,
\end{eqnarray*}
%where $\bm{p}$ is a probability distribution vector and $\bm{s}_i \in \mathcal{M}_i, \forall i=1,2, \ldots, n+1$.\\
%With this decomposition, The activation $\bm{s}_i$  chosen with probability $p_i, \forall i$. Hence the expected service rate of the $i$\textsuperscript{th} server of the virtual queue $Q_i$ is $\bm{\mu}^*_i$ at all slots. 
\end{itemize}
Since each of the virtual queues must be stable under the action of the policy $\pi^{*}$, from the theory of the GI/GI/1 queues, we know that there exists an $\epsilon >0$ such that 
\begin{eqnarray} \label{stability_cond}
 \mu_i^*-\sum_{j: i\in D_j}a_j^* \geq \epsilon, \hspace{10pt} \forall i \in V
\end{eqnarray}

Next, we will verify that the conditions \textbf{C1} and \textbf{C2} in Theorem \ref{hajek} holds for the Markov Chain of the virtual queues $\{\bm{\tilde{Q}}(t)\}_{t\geq 1}$ under the action of the \textbf{UMW} policy, with the Lyapunov function $L(\bm{\tilde{Q}}(t))= ||\bm{\tilde{Q}}(t)||$ at any arrival rate $\lambda < \lambda^*$. 
\subsubsection{Verification of Condition (\textbf{C1})- Negative Expected Drift}
From the definition of the policy \textbf{UMW}, it minimizes the RHS of the drift upper-bound \eqref{UMW_def} from the set of all feasible policies $\Pi$. Hence, we can upper-bound the conditional drift of the \textbf{UMW} policy by comparing it with the stationary policy $\pi^*$ described in \ref{stationary_policy} as follows:  
\begin{eqnarray*}
	&&\mathbb{E}(\Delta^{\textbf{UMW}}(t)|\bm{\tilde{Q}}(t)=\bm{\tilde{Q}})\nonumber \\
	&{\leq}& \frac{1}{2||\bm{\tilde{Q}}||}\bigg( B+  2 \sum_{i\in V}\tilde{Q}_i(t)\mathbb{E}\big(A_i^{\textrm{UMW}}(t)|\bm{\tilde{Q}}(t)=\tilde{\bm{Q}}\big)\nonumber
	\end{eqnarray*}
	\begin{eqnarray*}
	&-& 2 \sum_{i\in V}\tilde{Q}_i(t)\mathbb{E}\big(\mu_i^{\textrm{UMW}}(t)|\bm{\tilde{Q}}(t)=\tilde{\bm{Q}}\big)\bigg)
	\end{eqnarray*}
	\begin{eqnarray*}
	&\stackrel{(a)}{\leq}& \frac{1}{2||\bm{\tilde{Q}}||}\bigg( B+  2 \sum_{i\in V}\tilde{Q}_i(t)\mathbb{E}\big(A_i^{\pi^*}(t)|\bm{\tilde{Q}}(t)=\tilde{\bm{Q}}\big)\nonumber\\
	&-& 2 \sum_{i\in V}\tilde{Q}_i(t)\mathbb{E}\big(\mu_i^{\pi^*}(t)|\bm{\tilde{Q}}(t)=\tilde{\bm{Q}}\big)\bigg)\nonumber\\
	\end{eqnarray*}
	\begin{eqnarray*}
	&=& \frac{1}{2||\bm{\tilde{Q}}||}\bigg(B - 2 \sum_{i\in V} \tilde{Q}_i(t) \big(\mathbb{E}\mu_i^*(t)-\mathbb{E}A_i^*(t)\big)\bigg) \nonumber
	\end{eqnarray*}
	\begin{eqnarray*}
	&=& \frac{1}{2||\bm{\tilde{Q}}||} \bigg(B - 2 \sum_{i\in V} \tilde{Q}_i(t) \big(\mu_i^*-\sum_{j: i\in D_j}a_j^* \big)\bigg)\nonumber\\
	\end{eqnarray*}
	\begin{eqnarray}
	&\stackrel{(b)}{\leq }& \frac{B}{2||\bm{\tilde{Q}}||}-\epsilon \frac{\sum_{i \in V} \tilde{Q}_i(t)}{||\bm{\tilde{Q}}||}, \label{cur_bd}
\end{eqnarray}
where inequality (a) follows from the definition of the \textbf{UMW} policy and inequality (b) follows from the stability property of the randomized policy given in Eqn. \eqref{stability_cond}. 
Since the virtual-queue lengths $\bm{\tilde{Q}}(t)$ is a non-negative vector, it is easy to see that (\emph{e.g.} by squaring both sides) 
\begin{eqnarray*}
	\sum_{i \in V} \tilde{Q}_i(t) \geq \sqrt{\sum_i \tilde{Q}^2_i(t)}= ||\bm{\tilde{Q}}||
\end{eqnarray*}
Hence, from Eqn. \eqref{cur_bd} in the above chain of inequalities, we obtain
\begin{eqnarray} \label{drift_ineq3}
	\mathbb{E}(\Delta^{\textbf{UMW}}(t)|\bm{\tilde{Q}}(t)=\bm{\tilde{Q}}) \leq  \frac{B}{2||\bm{\tilde{Q}}||}-\epsilon
\end{eqnarray}
Thus, 
\begin{eqnarray*}
	\mathbb{E}(\Delta^{\textbf{UMW}}(t)|\bm{\tilde{Q}}(t)) \leq -\frac{\epsilon}{2}, \hspace{20pt} \forall ||\bm{\tilde{Q}}|| \geq B/\epsilon
\end{eqnarray*}
This verfies the negative expected drift condition \textbf{C1} in Theorem \ref{hajek}. 

\subsubsection{Verification of Condition (\textbf{C2})- Almost Surely Bounded Drift }
To show that the magnitude of one-step drift $|\Delta L(\bm{\tilde{Q}})|$ is almost surely bounded, we compute 
\begin{eqnarray*}
 |\Delta L(\bm{\tilde{Q}}(t))| &=& |L(\bm{\tilde{Q}}(t+1))- L(\bm{\tilde{Q}}(t))|\\
 &=& \big|||\bm{\tilde{Q}}(t+1)||- ||\bm{\tilde{Q}}(t))||\big|
\end{eqnarray*}
Now, from the dynamics of the virtual queues \eqref{queue-evolution}, we have for any virtual queue $\tilde{Q}_i$:
\begin{eqnarray*}
  |\tilde{Q}_i(t+1)- \tilde{Q}_i(t)| \leq |A_i(t)-\mu_i(t)|
\end{eqnarray*}
Thus,
\begin{eqnarray*}
 ||\bm{\tilde{Q}}(t+1)- \bm{\tilde{Q}}(t)|| \leq ||\bm{A}(t) - \bm{\mu}(t)|| \leq \sqrt{n}(A_{\max} + c_{\max})
\end{eqnarray*}
Hence, using the triangle inequality for the $\ell_2$ norm, we obtain
\begin{eqnarray*}
 |\Delta L(\bm{\tilde{Q}}(t))|=\big|||\bm{\tilde{Q}}(t+1)||- ||\bm{\tilde{Q}}(t))||\big|\leq \sqrt{n}(A_{\max} + c_{\max}),
\end{eqnarray*}
which verifies the condition \textbf{C2} of Theorem \ref{hajek}. 
\subsubsection{Almost Sure Bound on Virtual Queue Lengths}
Finally, we invoke Corollary \ref{corollary} to conclude that 
\begin{eqnarray*}
 \limsup_t ||\bm{\tilde{Q}}(t)|| = \mathcal{O}(\log t), \hspace{10pt} \textrm{w.p.} 1
\end{eqnarray*}
This implies that, 
\begin{eqnarray*}
 \max_i \bm{\tilde{Q}}_i(t) = \mathcal{O}(\log t) , \hspace{10pt} \textrm{w.p.} 1 \hspace{10pt} \blacksquare
\end{eqnarray*}

\subsection{Proof of Theorem \ref{physical_stability-theorem} } \label{physical_stability-proof}
Throughout this proof, we will consider only the typical sample point $\omega \in \mathcal{E}$ defined in Eqn. \eqref{q_bd}. For the sake of notational simplicity, we will drop the argument $\omega$ for evaluating a random variable $\bm{X}$ at $\omega$, i.e., the deterministic sample path $X(\omega, t), \omega \in \mathcal{E}$ will be simply denoted by $X(t)$. We now make a simple observation which will be useful in the proof of the theorem:
\begin{framed}
\begin{lemma} \label{seq_lemma}
 Consider a function $F: \mathbb{Z}_+ \to \mathbb{Z}_+$, where $\mathbb{Z}_+$ is the set of non-negative integers. Assume that $F(t)=\mathcal{O}(\log t)$. Define $M(t)= \sup_{0\leq \tau \leq t} F(\tau)$. Then,
 \begin{enumerate}
 \item $M(t)$ is non-decreasing in $t$ and $M(t) \geq F(t)$.
 \item $M(t)=\mathcal{O}(\log t)$.
 \end{enumerate}
\end{lemma}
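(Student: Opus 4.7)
The plan is to handle the two claims separately, with the second being the substantive one (though still quite elementary). Throughout, I will use the definition of $\mathcal{O}(\cdot)$ supplied earlier in the paper: $f(t)=\mathcal{O}(g(t))$ means there exist $c>0$ and $t_0$ with $f(t)\leq c g(t)$ for all $t\geq t_0$.

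For part (1), the argument is immediate from the definition of $M(t)$. Since $\{0,1,\ldots,t\}\subseteq \{0,1,\ldots,t+1\}$, taking suprema over a larger set cannot decrease the value, so $M(t+1)\geq M(t)$. Moreover, since $t\in\{0,1,\ldots,t\}$, the element $F(t)$ is itself in the set over which the supremum is taken, so $M(t)\geq F(t)$. No machinery is required here beyond unwinding definitions.

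For part (2), the idea is to split the supremum at the threshold $t_0$ coming from the hypothesis $F(t)=\mathcal{O}(\log t)$. By assumption there exist $c>0$ and $t_0\in\mathbb{Z}_+$ with $F(\tau)\leq c\log\tau$ for all $\tau\geq t_0$. For $t\geq t_0$, write
\begin{equation*}
M(t)=\max\Bigl(\max_{0\leq \tau< t_0}F(\tau),\; \sup_{t_0\leq \tau\leq t}F(\tau)\Bigr)\leq \max(K,\, c\log t),
\end{equation*}
where $K:=\max_{0\leq \tau< t_0}F(\tau)$ is a finite constant because the max is over a finite set of integers and $F$ takes values in $\mathbb{Z}_+$. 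The only mild wrinkle to address is that $c\log t$ may be smaller than $K$ for moderate $t$, so the bound is not yet visibly of the form $c'\log t$. To finish, I would pick any $t_0'\geq t_0$ large enough that $c\log t_0'\geq K$; then for all $t\geq t_0'$ we have $M(t)\leq c\log t$, verifying $M(t)=\mathcal{O}(\log t)$. Alternatively, one could write $M(t)\leq K+c\log t\leq (K/\log t_0' + c)\log t$ for $t\geq t_0'$ to exhibit the constant explicitly.

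The main (minor) obstacle is just this bookkeeping at small $t$: $M$ can be flat and strictly positive on an initial stretch where $c\log t$ has not yet caught up, so one must enlarge the threshold appropriately. Once that is done, the logarithmic growth of the tail dominates the constant contribution from the initial segment, and the $\mathcal{O}(\log t)$ conclusion follows. No probabilistic ingredient is needed; the lemma is a purely deterministic statement about sequences.
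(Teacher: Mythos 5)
Your proof is correct. Part (1) is handled identically to the paper: monotonicity of suprema over nested index sets, and membership of $F(t)$ in the set being maximized. For part (2), however, you take a slightly different route. The paper writes $M(t)=F(t_{\max}(t))$ where $t_{\max}(t)=\arg\max_{0\leq\tau\leq t}F(\tau)$, observes $t_{\max}(t)\leq t$, and asserts that for large enough $t$ one has $M(t)=\mathcal{O}(\log t_{\max}(t))=\mathcal{O}(\log t)$; this implicitly assumes the bound $F(\tau)\leq c\log\tau$ applies at $\tau=t_{\max}(t)$, i.e., that $t_{\max}(t)$ eventually exceeds the threshold $t_0$ from the definition of $\mathcal{O}(\cdot)$ --- a case distinction the paper does not spell out (if $t_{\max}(t)$ stays below $t_0$ forever, $M$ is simply bounded by a constant and the conclusion still holds, but by a different argument). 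You instead split the supremum at $t_0$, bound the head by a finite constant $K$ and the tail by $c\log t$, and then enlarge the threshold so that the logarithm dominates $K$. Your decomposition is the more careful of the two and explicitly resolves the edge case the paper waves away with ``for large enough $t$''; the paper's argmax formulation is marginally shorter. Both arguments are elementary and deterministic, and both establish the lemma.
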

\end{framed}
\begin{proof}
Clearly, $M(t)\sup_{0\leq \tau \leq t} F(\tau) \geq F(t)$ and 
\begin{eqnarray*}
M(t+1)=\sup_{0\leq \tau \leq t+1}F(\tau) \geq \sup_{0\leq \tau \leq t}F(\tau)= M(t). 
\end{eqnarray*}
To prove the second claim, let $t_{\max}(t)= \arg \max_{0\leq \tau \leq t} F(\tau)$. Clearly, $t_{\max}(t) \leq t$. Hence, for large enough $t$, we have
\begin{eqnarray*}
M(t) = F(t_{\max}(t))= \mathcal{O}(\log t_{\max}(t)) = \mathcal{O}(\log t).
\end{eqnarray*}
\end{proof}

As a consequence of Lemma \ref{seq_lemma} applied to Eqn. \eqref{arr_bd}, we have almost surely
\begin{framed}
\begin{eqnarray} \label{arr_cond}
 A_i(t_0,t) \leq S_i(t_0,t) + M(t), \hspace{10pt} \forall i \in V, \hspace{3pt}\forall t_0 \leq t,
\end{eqnarray}
\end{framed}
for some non-decreasing function $M(t)=\mathcal{O}(\log t)$. 
We now return to the proof of the main the main result, Theorem \ref{physical_stability-theorem}. 
% \begin{framed}
% \begin{proposition}
%  \textbf{{UMW}} is rate stable.
%  \end{proposition}
% \end{framed}
\begin{proof}[of the main result]
Our proof technique is inspired by an adversarial queueing theory argument, given in \cite{Gamarnik}. We remind the reader that we are analyzing a typical sample path satisfying Eqn. \eqref{arr_cond}, which holds almost surely. In the following argument, each copy of a packet is counted separately.\\
Without any loss of generality, assume that we start from an empty network at time $t=0$. Let $R_k(t)$ denote the total number of packets waiting to be transmitted further at time $t$, which have already been forwarded \emph{exactly} $k$ times by the time $t$. We call such packets ``layer $k$" packets. As we have mentioned earlier, if a packet is duplicated multiple times along its assigned route $D$ (which is a connected dominating set (or CDS, in short)), each copy of the packet is counted separately in the variable $R_k(t)$, \emph{i.e.,}
\begin{eqnarray} \label{r-k}
R_k(t)=\sum_{D \in \mathcal{D}} \sum_{i \in D_k} R_{(i, D)}(t), 
\end{eqnarray} 
where the variable $R_{(i,D)}(t)$ denotes the number of packets following the CDS $D$, that are waiting to be transmitted by the node $i \in D$ at time $t$ and $D_k$ is the set of nodes in the CDS $D$, which are exactly $k$\textsuperscript{th} hop away from the source along the CDS $D$. 
%If there are more than one node, we include all these nodes in the summation \eqref{r-k}.
We show by induction that $R_k(t)$ is \emph{almost surely} bounded by a function, which is $\mathcal{O}(\log t)$.\\\\
\indent \textbf{Base Step} $k=0$: Consider the source node $i= \texttt{r}$ and an arbitrary time $t$. Let $t_0\leq t$ be the largest time at which no packets of layer $0$ (packets which are present only at the source and have never been transmitted before) were waiting to be transmitted by the source. If no such time exists, set $t_0=0$. 
%Hence, the total number of layer $0$ packets waiting to be transmitted by the source at time $t_0$ is at most $Q_{\texttt{r}}(0)$, which can be assumed to be zero, without any loss of generality. 
During the time interval $(t_0,t]$, as a consequence of the property in Eqn.\ \eqref{arr_cond} of the \textbf{UMW} policy, at most $S_{\texttt{r}}(t_0,t)+M(t)$ external packets have arrived to the source $\texttt{r}$ for broadcasting. Also, by the choice of the time $t_0$, the source node $\texttt{r}$ was always having packets to transmit during the entire time interval $(t_0,t]$. Since \textsf{LTF} packet scheduling policy is followed in the physical network, layer $0$ packets have priority over all other packets (in fact, there is packet of other layers present \emph{only} at the source, but this is not the case at other nodes which we will consider in the induction step).  Hence, it follows that the total number of layer $0$ packets at time $t$ satisfies 
\begin{eqnarray}
R_0(t)= \sum_{D \in \mathcal{D}} \sum_{i \in {D}_0} R_{(i,D)} (t) &\leq&  S_{\texttt{r}}(t_0,t)+M(t) - S_{\texttt{r}}(t_0,t) \nonumber \\
&\leq&  M(t) 
\end{eqnarray}
Define $B_0(t)\stackrel{\text{def}}{=}M(t)$. Since $M(t)=\mathcal{O}(\log t)$, we have $B_0(t)=\mathcal{O}(\log t)$. Note that, since $M(t)$ is non-decreasing from Lemma \eqref{seq_lemma}, so is $B_0(t)$. \\\\
\textbf{Induction Step:}
As our inductive assumption, suppose that, for some non-decreasing functions $B_j(t)=\mathcal{O}(\log t), j=0,1,2,\ldots, k-1$, we have $ R_j(t) \leq B_j(t)$, for all time $t$. We next show that there exists a non-decreasing function $B_k(t)=\mathcal{O}(\log t)$ such that $ R_k(t) \leq B_k(t)$ for all time $t$. \\
To prove the above assertion, fix a node $i$ and an arbitrary time $t$. Let $t_0\leq t$ denote the largest time before $t$, such that there were no layer $k$ packets waiting to be transmitted by the node $i$. Set $t_0=0$ if no such time exists. Hence the node $i$ was always having packets to transmit during the time interval $(t_0,t]$ (packets in layer $k$ or lower). The layer $k$ packets that wait to be transmitted by the node $i$ at time $t$ are composed only of a subset of packets which were in layers $0\leq j \leq k-1$ at time $t_0$ or packets that arrived during the time interval $(t_0, t]$ and include the node $i$ as \emph{one of their $k$\textsuperscript{th} transmitter} along the route followed. By our induction assumption, the first group of packets has a size bounded by $\sum_{j=0}^{k-1} B_j(t_0)\leq \sum_{j=0}^{k-1}B_j(t)$, where we have used the fact (using our induction step) that the functions $B_j(\cdot)$'s are monotonically non-decreasing. The size of the second group of packets is given by $\sum_{D : i \in D_k} A_D(t_0,t)$. We next estimate the number of layer $k$ packets that crossed the edge $e$ during the time interval $(t_0,t]$. Since the \textsf{LTF} packet scheduling policy is used in the physical network, layer $k$ packets were not processed only when there were packets in layers up to $k-1$ that included the node $i$ in its routing CDS. The number of such packets is bounded by $\sum_{j=0}^{k-1}B_j(t_0)\leq \sum_{j=0}^{k-1}B_j(t)$, which denotes the total possible number of packets in layers up to $k-1$ at time $t_0$, \emph{plus} $\sum_{j=0}^{k-1} \sum_{D: i \in D_j}A_D(t_0,t)$, which is the number of new packets that arrived in the interval $(t_0,t]$ and includes the node $i$ as a transmitter within their first $k-1$ hops. Thus, we conclude that at least 
\begin{eqnarray}
\max \bigg\{0, S_i(t_0,t) - \sum_{j=0}^{k-1}B_j(t) - \sum_{j=0}^{k-1} \sum_{D: i \in D_j}A_D(t_0,t) \bigg \}
\end{eqnarray}  
packets of layer $k$ have been transmitted by the node $i$ during the time interval $(t_0,t]$. Hence, the total number of layer $k$ packets present at node $i$ at time $t$ is given as 
\begin{eqnarray*}
&\sum_{D: i \in D_k}\hspace{-10pt}& R_{(i,D)}(t) \leq \sum_{j=0}^{k-1} B_j(t) + \sum_{D : i \in D_k} A_D(t_0,t) \\
&-& \big(S_i(t_0,t) - \sum_{j=0}^{k-1}B_j(t) - \sum_{j=0}^{k-1} \sum_{D: i \in D_j}A_D(t_0,t)\big)\\
&=& 2\sum_{j=0}^{k-1}B_j(t) + \sum_{j=0}^{k} \sum_{D: i \in D_j}A_D(t_0,t) - S_i(t_0,t) \\
&\stackrel{(a)}{\leq} & 2\sum_{j=0}^{k-1}B_j(t) + A_i(t_0,t) - S_i(t_0,t) \\
& \stackrel{(b)}{\leq} & 2\sum_{j=0}^{k-1}B_j(t) + M(t),
\end{eqnarray*}
where the inequality (a) follows from the fact that each packet gets routed to a node $i$ for transmission only once and hence 
\begin{eqnarray*}
	A_i(t_0,t)= \sum_{j=0}^{n-1}\sum_{D: i \in D_j}A_D(t_0,t), \hspace{10pt} \forall i\in V. 
\end{eqnarray*}
The inequality (b) follows from the property of the typical sample paths, stated in Eqn.\ \eqref{arr_cond}. Hence, the total number of layer $k$ packets at time $t$ is bounded as
\begin{eqnarray*}
R_k(t) = \sum_i \sum_{D: i \in D_k}\hspace{-10pt}& R_{(i,D)}(t) \leq 2n\sum_{j=0}^{k-1}B_j(t) + nM(t)
\end{eqnarray*}
Define $B_k(t)$ to be the RHS of the above equation, i.e. 
\begin{eqnarray}\label{bk}
B_k(t) \stackrel{(\text{def})}{=}  2n\sum_{j=0}^{k-1}B_j(t) + nM(t)
\end{eqnarray} 
 
Using our induction assumption and Eqn. \eqref{bk}, we conclude that 
\newpage 
$B_k(t)=\mathcal{O}(\log t)$, and it is easily seen to be non-decreasing. This completes the proof of the induction step. 

To conclude the proof of the theorem, observe that the sum of the lengths of the physical queues at time $t$ may be alternatively written as 
\begin{eqnarray}\label{sumQ}
\sum_{i \in V}{Q}_i(t)=\sum_{k=1}^{n-1} R_k(t)
\end{eqnarray}
Since the previous inductive argument shows that for all $k$, we have $R_k(t) \leq B_k(t)$ where $B_k(t)=\mathcal{O}(\log t)$ \emph{a.s.}, we have $\sum_{i \in V}{Q}_i(t)=\mathcal{O}(\log t)$, and hence
%Thus combining Eqn. \eqref{sumQ} with Lemma \eqref{rate_stability}, we conclude that  
\begin{eqnarray}
\lim_{t \to \infty} \frac{\sum_{i \in V}{Q}_i(t)}{t}=0, \hspace{15pt} \text{w.p. } 1.
\end{eqnarray}
%This implies that the physical queues are rate stable \cite{neely2010stochastic}, jointly under the operation of \textbf{UMW} and \textsf{LTF}. 
\end{proof}

\subsection{Proof of Lemma \ref{cap_bd1}} \label{cap_bd_proof}
\begin{proof}
Observe that, due to the primary interference constraints, the nodes $\texttt{r}, a$ and $d$ can not be activated at the same slot. Consider any arbitrary policy, which activates the node $i$ for a fraction $f_i, i \in V$ times. Hence, we have the constraint that 
\begin{eqnarray} \label{constr1}
 f_{\texttt{r}}+f_a+f_c\leq 1
\end{eqnarray}
On the other hand, if the policy $\pi$ achieves a broadcast rate of $\lambda$, it must be that 
\begin{eqnarray*}
 \lambda &\leq& f_\texttt{r}, \hspace{10pt} \textrm{ (considering broadcast rate at node } a)\\
 \lambda &\leq& f_a, \hspace{10pt} \textrm{ (considering broadcast rate at node } b) \\
 \lambda &\leq& f_c, \hspace{10pt} \textrm{ (considering broadcast rate at node } f).
\end{eqnarray*}
Adding the above three equations, we have 
\begin{eqnarray}
 3\lambda \leq f_{\texttt{r}}+f_a+f_c \stackrel{(a)}{\leq} 1, 
\end{eqnarray}
where the inequality (a) follows from the constraint \eqref{constr1}. Since the policy $\pi$ is assumed to be arbitrary, we conclude that the broadcast capacity of the $3\times 3$ grid network is at most $\frac{1}{3}$. 
\end{proof}

\end{document}